%
%
%
%
%
\RequirePackage{fix-cm}
\documentclass[smallextended]{svjour3}       
\smartqed  
\usepackage[utf8]{inputenc}
\usepackage[english]{babel}
\usepackage{amsmath}
\usepackage{dsfont}
\usepackage{mathrsfs}
\usepackage{amsfonts}
\usepackage{amssymb}
\usepackage{makeidx}
\usepackage{graphicx}
\usepackage{epstopdf}
\usepackage[left=4cm,right=3cm,top=3cm,bottom=3cm]{geometry}
\usepackage{float}
\usepackage{cite}
\usepackage{ragged2e}
\usepackage{csquotes}
\usepackage{color}

\usepackage{mathtools}

\newtheorem{mydef}{Definition}
\newtheorem{maintheorem}{Main Theorem}
\newtheorem{col}{Corollary}
\newtheorem{con}{Conjecture}
\newtheorem{observation}{Observation}

\newcommand{\NN}{{\mathbb N}}

\setlength\parindent{0pt}
%
%
%
%
%

\begin{document}

\title{Ancestral sequence reconstruction with Maximum Parsimony}


\author{Lina Herbst         \and
        Mareike Fischer 
}


\institute{L. Herbst \at
              Institute for Mathematics and Computer Science,
Greifswald University, 
Walther-Rathenau-Str. 47, 17489 Greifswald, Germany \\
              \email{herbstl@uni-greifswald.de}           
           \and
           M. Fischer \at
              Institute for Mathematics and Computer Science,
Greifswald University, 
Walther-Rathenau-Str. 47, 17489 Greifswald, Germany\\
              \email{email@mareikefischer.de} 
}

\date{Received: date / Accepted: date}

\maketitle

\begin{abstract} One of the main aims in phylogenetics is the estimation of ancestral sequences based on present-day data like, for instance, DNA alignments. One way to estimate the data of the last common ancestor of a given set of species is to first reconstruct a phylogenetic tree with some tree inference method and then to use some method of ancestral state inference based on that tree. One of the best-known methods both for tree inference as well as for ancestral sequence inference is Maximum Parsimony (MP). In this manuscript, we focus on this method and on ancestral state inference for fully bifurcating trees. In particular, we investigate a conjecture published by Charleston and Steel in 1995 concerning the number of species which need to have a particular state, say $a$, at a particular site in order for MP to unambiguously return $a$ as an estimate for the state of the last common ancestor. We prove the conjecture for all even numbers of character states, which is the most relevant case in biology. We also show that the conjecture does not hold in general for odd numbers of character states, but also present some positive results for this case.

\noindent

\keywords{Maximum parsimony \and Fitch algorithm \and ancestral sequence reconstruction }
\end{abstract}

\section{Introduction}
\label{intro}
Reconstructing data, like e.g. DNA sequences, of common ancestors of species living today is one of the main challenges of modern phylogenetics. Ancestral sequence reconstruction is a growing research field as information about the genetic makeup of ancestors is important in various areas, for instance in drug design, protein function investigation and comparative genomics \cite{liberles, moretaxa, protein}. Moreover, ancestral sequence estimation is crucial for understanding the evolution of today's species. 

One possible way to estimate ancestral sequences is to consider the underlying evolutionary  tree of the species of interest and to use Maximum Parsimony (MP) to infer ancestral sequences based on this tree. Some reasons that make MP interesting is that it does not require estimates of branch lengths on the input tree and that as a purely combinatorial method, it is a very fast method of ancestral sequence reconstruction \cite{fitch,phylogenetics}. In this manuscript, we are interested in the last common ancestor of all species under investigation, not the ancestors of any subgroups. This means that we can use the famous Fitch algorithm \cite{fitch} to infer {\it all }  possible ancestral states (we will explain this more in-depth in the following section).
For now just recall that the Fitch algorithm goes from the leaves to the root in a binary tree, i.e. in a tree where all nodes have exactly two children, and assigns each parental node a state set based on the state sets of its children. If the children's state sets intersect, this intersection is taken as the estimated state set for the ancestor. Otherwise, the union of the two sets is taken. For instance, in Figure \ref{cat}, coming from the lowermost leaf, most inner nodes will be assigned $\{b\}$, but the parent of leaf 2 will be assigned $\{a,b\}$, because set $\{a\}$ coming which from leaf 2 and set $\{b\}$ coming from leaves $3, \ldots,n$ (where $n$ is the number of taxa) do not intersect. Then, root $\rho$ is assigned $\{a\}$ as the intersection of set $\{a\}$ coming from leaf 1 and set $\{a,b\}$ coming from leaves $2, \ldots, n$, intersect. This means that the root of the tree will unambiguously be assigned $a$ in this example.

However, MP cannot always make unambiguous decisions about the root sequence. Sometimes, for a particular site, MP might be indecisive between, say, the two DNA nucleotides $A$ and $C$ (note that our study is not restricted to the DNA alphabet, but can rather be applied to all kinds of alphabets, which is why we use lowercase letters in the following). Of course, for ancestral sequence reconstruction the question when the assignment is unique, i.e. when the decision is unambiguous, is of the utmost importance. In this regard, the following question arises: How many present-day species, i.e. how many leaves of the tree, need to be in a certain state, say $a$, such that the Maximum Parsimony estimate of the last common ancestor is also $a$? Mathematically speaking, this question is equivalent to the question how many leaves need to be assigned state $a$ (or `colored` with `color' $a$), such that $a$ is the unambiguous root state estimate in the underlying phylogenetic tree. 

It can be easily seen that the answer to this question depends on the tree shape -- for instance, the number of taxa; but also the height of the tree plays a crucial role. For instance, consider again the so-called caterpillar tree, which is depicted in Figure \ref{cat}. This is an extreme case: As we have explained above, coloring two leaves, namely the uppermost two, with $a$ is always sufficient to give an unambiguous root state estimate of $a$, even if all other leaves, which may be the vast majority, are in a different state. So for the caterpillar, two leaves in state $a$ are always sufficient to give an unambiguous $a$ root state estimate -- regardless of the number of taxa under consideration. 

\begin{figure}[H]
\centering
\includegraphics[scale=0.35]{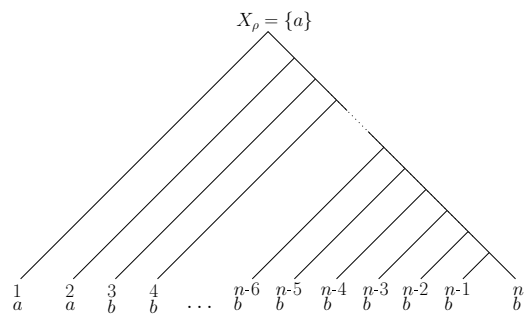}
\caption{A caterpillar tree, whose last common root is assigned $\{a\}$ by just assigning two leaves $a$. Note that this construction works even if the total number of taxa in the tree is large.}
\label{cat}
\end{figure}

On the other hand, if the number of species is $n=2^k$ for some natural number $k$, then you can also consider the so-called fully resolved and balanced tree, which is also often referred to as the fully bifurcating tree. The case $k=3$, i.e. $n=8$, is depicted in Figure \ref{treeh3}. This tree turns out to be the opposite extreme: As all leaves are separated from the root by the same number of edges, it is clear that no constant number of leaves can have such a strong impact as in the caterpillar case, but that instead the number of leaves which need to be assigned $a$ in order for MP to return $a$ as the unambiguous root state estimate will grow as $k$ (and thus $n$) grows. So in this regard, the fully bifurcating tree is of the utmost importance -- once you know the minimal number of leaves that need to be assigned $a$ in this particular tree in order for Maximum Parsimony to unambiguously return $a$ as the root state estimate, you can use this number as an upper bound for {\it any} given tree on $n=2^k$ taxa.

So the present manuscript is concerned with the problem of finding the minimal number of leaves that need to be assigned $a$ such that the MP root state estimate is $\{a\}$ when the underlying tree is fully bifurcating. This problem has first been addressed in \cite{paper}, where the authors stated a recursive formula which used the Fibonacci numbers and which explicitly calculates the desired number in case the alphabet under consideration consists only of $r=2$ states. So biologically, this result would only be applicable to binary data like for instance transversions versus transitions, but not to data like RNA or DNA (four states), proteins (20 states) or codons (64 states). However, in their manuscript the authors stated a conjecture for general alphabets of arbitrary size $r$. This conjecture inspired our work. In fact, in the present manuscript we first prove the conjecture for the case where $r$ is even (which makes the result applicable to biological data like DNA, RNA, proteins or codons as stated above). We then analyze the more involved case where $r$ is odd and show why the conjecture does not generally hold in this case, but also show some positive results for this case, like a simplification to only two different cases that need to be considered when $r=3$.

\subsection{Preliminaries}
\subsubsection{Basic definitions and notation}

Before we can present our results, we first have to introduce some basic concepts. We start with some definitions. Recall that a {\it rooted binary phylogenetic tree on a taxon set $L$} (also sometimes referred to as a {\it rooted binary phylogenetic $L$-tree}) is a connected, acyclic graph in which the vertices of degree at most 1 are called {\it leaves}, and in which there is exactly one node $\rho$ of degree 2, which is referred to as {\it root}, and all other non-leaf nodes have degree 3. Moreover, in a rooted binary phylogenetic $L$-tree the leaves are bijectively labelled by the elements of $L$. Note that in the special case where there is only one node, this node can be regarded as a leaf {\it and} as a root at the same time, i.e. this is the only case where the root does not have degree 2. 

Throughout this paper, when we refer to trees, we always mean rooted binary phylogenetic $L$-trees, and we assume without loss of generality that $L=\{1, \dots n\}$. Note all nodes on the path from a node $v$ to the root $\rho$ (except for $v$) are called {\it ancestors} of $v$, and $v$ is a {\it descendant} of all these nodes. Whenever the path from $v$ to an ancestor $u$ consists of only one edge, $u$ is called {\it direct ancestor} of $v$, and $v$ is called {\it direct descendant} of $u$.

Now let $k \in \NN$. Then a \textit{fully bifurcating phylogenetic tree of height $k$, $T_k$,} is a rooted binary phylogenetic tree which has exactly $n=2^k$ leaves and height $k$. Here,  {\it height} refers to the number of edges between the root and each leaf (note that this number is uniquely determined for rooted binary trees with $n=2^k$ leaves). Figure \ref{treeh3} depicts the case $n=2^3=8$, i.e. tree $T_3$. Note that for $k=0$, i.e. $n=2^0=1$, the corresponding fully bifurcating phylogenetic tree $T_0$ consists of only one node, which is at the same time its root and only leaf.

\begin{figure}
\centering
\includegraphics[scale=0.5]{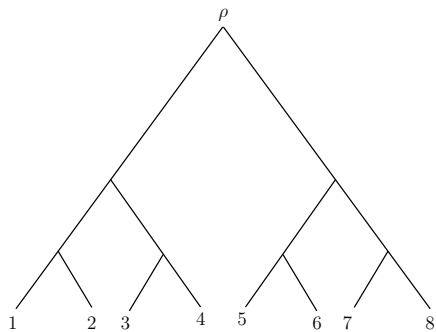}
\caption{$T_3$ is a fully bifurcating phylogenetic tree of height 3 with $2^3=8$ leaves.}
\label{treeh3}
\end{figure}

Now that we introduced tree $T_k$, which we will investigate in this manuscript, we need to introduce two more concepts: the kind of data we will map onto the leaves of the tree as well as the method with which we will then estimate ancestral data like in particular the data of the root of the tree.

We first consider the data. Assume we have an alphabet $R$ with $|R|=r$ so-called {\it character states}, like e.g. the DNA alphabet $R=\{A,C,G,T\}$, where $r=4$. Then a {\it character} on a taxon set $L$ is just a function $f$ from $L$ to $R$, i.e. $f : L \rightarrow R.$ 
If $| f(L) | = \tilde{r}$, then $f$ is called $\tilde{r}$-state character. Note that as we consider $L=\{1,2, \ldots, n\}$, we often write $f=f(1)f(2)\ldots f(n)$ instead of listing $f(1)$, $f(2)$, etc. explicitly. For instance, Figure \ref{cat} shows an example of character $f=aab\dots b$ mapped onto the leaves of a rooted binary phylogenetic tree. Note that given real data like e.g. a DNA alignment, a character can just be regarded as a column in the alignment, because each species is assigned one character state (however, note that in our case, alignments are considered to be gap-free). In this context, biologists also often refer to characters as {\it sites}. 

We now want to turn our attention to the method which we will use to estimate ancestral sequence data, i.e. sequences of ancestral states. Given a rooted binary phylogenetic tree $T$, the \textit{standard decomposition} of $T$ is the decomposition of $T$ into its two maximal rooted pending subtrees \cite{combinatorial}. Informally speaking, as the root $\rho$ has degree 2, this means that the standard decomposition considers the two subtrees directly below $\rho$. An illustration of this decomposition is given by Figure \ref{decomp}. Note that the fully bifurcating tree $T_k$, with which this manuscript is concerned, can be decomposed into two trees of shape $T_{k-1}$, which in turn can be decomposed into two $T_{k-2}$ trees and so forth.

\begin{figure}[H]
\centering
\includegraphics[scale=0.45]{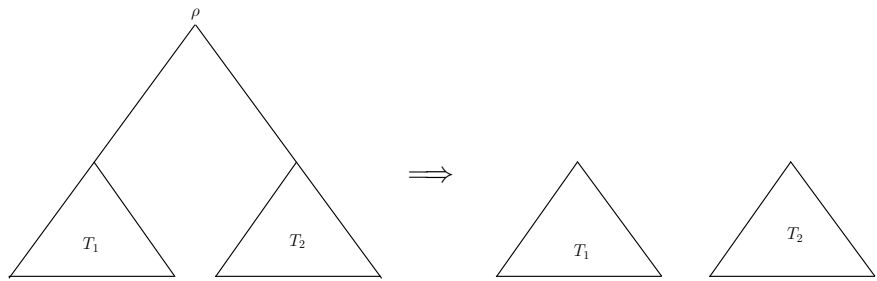}
\caption{The standard decomposition of a rooted binary phylogenetic tree. The tree is decomposed into its two maximal rooted subtrees $T_1$ and $T_2$.}
\label{decomp}
\end{figure}

A frequently used method for ancestral state reconstruction is {\it Maximum Parsimony}, or {\it MP} for short \cite[Section 5]{phylogenetics}. Given a tree and a character, MP seeks to find possible state assignments to all internal nodes such that the number of edges in the so-called {\it change set} is minimized. The change set consists of all edges of $T$ whose two endpoints are assigned different states. The minimal number of edges in the change set for a given tree $T$ and a given character $f$ is called {\it Maximum Parsimony score} (or {\it MP score} for short) of $f$ on $T$. The MP score of an entire alignment, i.e. of a sequence of characters on $T$, is then simply the sum of all scores induced by the individual characters. This is why we investigate only individual characters in this manuscript rather than entire alignments.  

Calculating the MP score for a given tree and a given character can be done in polynomial time using the famous {\it Fitch algorithm} \cite{fitch}, which in turn is based on {\it Fitch's parsimony operation}, which we define next.

\begin{mydef}{(Fitch's parsimony operation)}\label{parsimonyoperation}\\
Let $R$ be a nonempty finite set and let $A, B \subseteq R$. Then, \textit{Fitch's parsimony operation} $*$ is defined by 
$$A*B \coloneqq \begin{cases}
A \cap B, & \text{if } A \cap B \neq \emptyset, \\
A \cup B, & \text{otherwise.}
\end{cases}$$
This means that whenever the intersection of sets $A$ and $B$ is not empty, $A*B$ is set to be equal to this intersection. Otherwise, the union is taken instead of the intersection.
\end{mydef}
Using this operation, the {\it Fitch algorithm} \cite{fitch} works as follows (note that what we call the Fitch algorithm is in fact only one phase of the algorithm described in \cite{fitch}, but this is the only part required to estimate possible root states and is thus sufficient for the present manuscript). 
Assume we have a rooted binary tree $T$ with leaf set $L$ and a character $f: L\rightarrow R$. Now the vertices of $T$ are recursively assigned non-empty subsets of $R$. First, each leaf is assigned the set consisting of the state assigned to it by $f$. Then all other vertices $v$, whose two direct descendants have already been assigned a subset, say $A$ and $B$, are assigned the set $A*B$. This step is continued upwards along the tree until the root $\rho$ is assigned a subset, which is denoted by $X_\rho$. 
The sets assigned to each node represent possible state sets which lead to assignments of states to all internal nodes, so-called {\it extensions of character $f$}, such that the number of edges in the change set is minimized and thus the parsimony score is realized on $T$ given $f$. Note that Fitch described two more phases \cite{fitch}. One of them only  derives extensions once the state sets for each node are given. However, the other one adds more states to some internal node sets as the procedure described above might not recover all possible states for all internal nodes \cite{felsenstein}. However, note that the states suggested for the root $\rho$ by the described algorithm do not require any correction \cite{fitch}. Therefore, as the present manuscript is mainly concerned with $\rho$, the described procedure is sufficient and no corrections need to be made to $X_\rho$. 

We are now in the position to turn our attention to the main question of our manuscript: Given a fully bifurcating phylogenetic tree $T_k$ on taxon set $L$ and height $k$, as well as an alphabet $R$ with $|R|=r$, some set $A\subseteq R$ and $a \in A$,  what is the minimal number of leaves which must be assigned $a$ such that $X_\rho=A$? Given $r \geq 2$ and $| A | \leq 2^k$, we denote this minimum by  $f_{k,r}^A$. For simplicity, let $f_{k,r} := f_{k,r}^{\{a\}}$. 

With the standard decomposition of rooted binary trees and Fitch's algorithm, it can be easily seen that 
\begin{equation}
f_{k+1,r}^A = \min_{\substack{B,C \subseteq R, \\ |B|,|C| \leq 2^k}} \{f_{k,r}^B+f_{k,r}^C: B*C=A \}, 
 \label{formel}
 \end{equation}
where $*$ is the parsimony operation.

Note that the minimum is taken over all possible subsets $B$ and $C$ of $R$, which could be assigned to the direct descendants of $\rho$ and which would then be used by Fitch's operation to calculate $X_\rho$. Whenever there is no ambiguity, we omit the subscripts and only refer to $\min \{f_{k,r}^B+f_{k,r}^C\}$. The main aim of this manuscript is to find a way to calculate this minimum without having to consider all possible sets $B$ and $C$. Our work was motivated by the fact that in \cite{paper} it was shown by M. Steel and M. Charleston that $f_{k,2}$ equals the $(k+1)$th Fibonacci number; so $f_{k,2}$ can be calculated without considering all possible state assignments to the nodes of $T_{k-1}$. However, the authors did not consider the general case $r>2$, but instead stated the following conjecture for this case which we seek to answer in the present manuscript.

\begin{con}[Charleston and Steel \cite{paper}]
\label{conjecture} $\mbox{ \hspace{8cm}}$
For $T_k$ and $r \geq 2$ we have:
\begin{align} f_{k,r}= 
\begin{cases}
f_{k-p,r}+f_{k-p-1,r} & \text{when } r=2p, \mbox{\hspace{0.1cm}} p \in \NN_{\geq 1},\\
2 \cdot f_{k-p,r} & \text{when } r=2p-1, \mbox{\hspace{0.1cm}} p \in \NN_{\geq 2}
\end{cases}.
\end{align}
\end{con}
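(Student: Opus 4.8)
The plan is to reduce the set-indexed quantity $f_{k,r}^A$ to a single scalar sequence, then to an extremal problem about how the two subtrees produced by the standard decomposition may split the target set, and finally to carry out the analysis of that extremal problem separately for even and odd $r$. First I would use the symmetry of Fitch's operation: any permutation of $R$ fixing $a$ maps a coloring realizing $X_\rho=A$ to one realizing $X_\rho=A'$ using the same number of leaves colored $a$, so for $a\in A$ the value $f_{k,r}^A$ depends only on $k$, $r$ and $s:=|A|$; write $\varphi_k(s)$ (suppressing $r$). Moreover $f_{k,r}^B=0$ whenever $a\notin B$ and $|B|\le\min(2^k,r-1)$, since one can colour no leaf $a$ and build $B$ from the remaining $r-1$ states. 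Substituting this into Equation~(\ref{formel}) and splitting $B*C=A$ into the intersection case ($B\cap C=A$, so $a\in B,C$ and $|B\setminus A|+|C\setminus A|\le r-s$) and the union case ($B\sqcup C=A$, so $a$ lies in exactly one part) turns the recursion into
\[
\varphi_{k+1}(s)=\min\Big\{\min_{\substack{b,c\ge s,\ b+c\le r+s\\ b,c\le 2^k}}\big(\varphi_k(b)+\varphi_k(c)\big),\ \min_{1\le t\le s-1}\varphi_k(t)\Big\},
\]
where for $s=1$ the union term is vacuous.

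Next I would prove two structural lemmas by induction on $k$: (i) $\varphi_k(\cdot)$ is non-increasing, since a larger target set only requires $a$ to tie rather than to win; and (ii) for $s\ge 2$ the union term is optimal, which by (i) equals $\varphi_k(s-1)$, yielding the clean recursion $\varphi_{k+1}(s)=\varphi_k(s-1)$ and hence the shift identity $\varphi_k(s)=\varphi_{k-s+1}(1)=f_{k-s+1,r}$ (with suitable boundary conventions). Lemma~(ii) is the delicate point of this stage: one must verify that the intersection term never undercuts $\varphi_k(s-1)$. Granting the shift identity collapses the whole problem onto the single sequence $f_{k,r}=\varphi_k(1)$, and the $s=1$ recursion becomes
\[
f_{k+1,r}=\min_{\substack{b+c\le r+1\\ 1\le b,c\le\min(2^k,r)}}\big(f_{k-b+1,r}+f_{k-c+1,r}\big).
\]

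Because $(f_{j,r})_j$ is non-decreasing in $j$, the minimum is attained when $b+c$ is as large as feasibility allows, namely $b+c=r+1$, so the two indices $i=k-b+1$ and $j=k-c+1$ satisfy the fixed sum $i+j=2k+1-r$. For even $r=2p$ this sum is odd, whose most balanced split is the adjacent pair $\{k-p,\ k-p+1\}$, giving exactly the conjectured $f_{k+1-p,r}+f_{k-p,r}$; for odd $r=2p-1$ the sum is even and the balanced split is the single centre $k+1-p$, giving $2f_{k+1-p,r}$. Thus in both branches the conjecture is equivalent to the single assertion that the balanced split minimises $f_{i,r}+f_{j,r}$ for fixed $i+j$, i.e.\ to a convexity statement for $(f_{j,r})_j$ on the feasible index window, which I would try to establish by induction directly from the recursion above.

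The hard part, and the place where the two branches genuinely diverge, is exactly this convexity. For even $r$ the balanced split is an adjacent pair, and the inequality $f_{m,r}+f_{m+1,r}\le f_{m-t,r}+f_{m+1+t,r}$ can be propagated through the induction across the whole admissible range, so I expect the even branch to go through cleanly and to yield the conjecture for every $k$. For odd $r$ the balanced split sits at one index, so the needed inequality is $2f_{m,r}\le f_{m-t,r}+f_{m+t,r}$, and here I anticipate the induction to break at the bottom of the range, because the initial segment of $(f_{j,r})$ is not convex (for $r=3$ one has $2f_{1,3}=4>3=f_{0,3}+f_{2,3}$): when the balanced centre $m=k+1-p$ lands on this defect, a strictly unbalanced split becomes cheaper and the stated equality is violated (concretely at $r=3$, $k=3$). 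My expectation, therefore, is that this single convexity step simultaneously completes the even branch in full and, for the odd branch, isolates a finite set of small-$k$ exceptions where the conjectured formula cannot hold. The main obstacle is thus not a missing technique but the genuine failure of the convexity underlying the odd branch on an initial segment; I would analyse that segment directly to pin down exactly when the conjecture survives and, for $r=3$, precisely which value of $k$ it fails at.
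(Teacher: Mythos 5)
Your plan follows essentially the same route as the paper: your $\varphi_k(s)$ is the paper's $f_{k,r}^{A_s}$ (Lemma~\ref{lemma1}(i)), your union/intersection split of Equation~\eqref{formel} is its Equation~\eqref{fkrA}, your monotonicity statements are its Theorems~\ref{1} and~\ref{2}, your shift identity $\varphi_{k+1}(s)=\varphi_k(s-1)$ is its Theorem~\ref{3}, and your reformulation of the conjecture as ``the balanced split of $i+j=r+1$ wins'' is exactly the inequality \eqref{geradeib} that the paper proves resp.\ refutes. Your diagnosis of the odd case is also correct, down to the counterexample $2f_{1,3}=4>3=f_{0,3}+f_{2,3}$ and the failure point $r=3$, $k=3$.

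The genuine gap is your even branch. The dichotomy you propose --- even $r$: adjacent-pair convexity ``propagates across the whole admissible range'', so the conjecture holds for \emph{every} $k$; odd $r$: convexity breaks at the bottom --- is false. The same flat initial segment that ruins the odd case ruins the even case as well: for $r=2p$ one has $f_{0,r}=1$ and $f_{k,r}=2$ for all $1\le k\le p$ (split the alphabet into two halves meeting only in $a$; this is Lemma~\ref{lemmar2p}, part 1), so already for $r=6$ your adjacent-pair inequality fails at the bottom, $f_{1,6}+f_{2,6}=4>3=f_{0,6}+f_{3,6}$, and the conjecture itself fails there: by Equation~\eqref{f1} and Theorem~\ref{3}, $f_{5,6}\le f_{4,6}^{A_2}+f_{4,6}^{A_5}=f_{3,6}+f_{0,6}=2+1=3$, whereas the conjectured formula gives $f_{2,6}+f_{1,6}=4$ (likewise $f_{6,8}=3\neq 4$). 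So your induction has no valid starting point in the small-$k$ range, and the even-case result simply cannot be true in the scope you claim. The paper's Main Theorem~\ref{gerade} is correspondingly restricted to $k\ge r$, and the crux of its proof is precisely the base case $k=r$: it requires computing the small-height values explicitly ($f_{k,r}=2$ for $k\le p$, $f_{p+1,r}=3$, $f_{k,r}\le 4$ for $p+1<k\le r$, i.e.\ Lemma~\ref{lemmar2p}), after which the comparison at $k=r$ collapses, via these values and Theorem~\ref{2}, to $f_{p+1,r}\le\min\{f_{r,r},\dots,f_{p+2,r}\}$; only from there does the induction propagate. Your proposal contains neither these small-height computations nor any base-case analysis, so the even branch cannot be completed as described --- nor even correctly stated. (A second, acknowledged but unfilled, gap: you flag the shift identity $\varphi_{k+1}(s)=\varphi_k(s-1)$ as the delicate point but give no argument that the intersection term never undercuts the union term; the paper needs a separate joint induction, its Theorem~\ref{3}, for exactly this.)
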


\par\vspace{0.5cm} Note that Conjecture \ref{conjecture} is only defined for $k \geq p+1$ if $r=2p$ and $k \geq p$ if $r=2p-1$, because $f_{k,r}$ is only defined for $k\geq 0$ as $k$ is the height of the underlying tree $T_k$. Also note that $f^A_{k,r}$ is only defined whenever $|A| \leq 2^k$, because by the Fitch operation you can never get more states into the root estimate set than there are leaves (as each leaf might be assigned a different state, which would lead to a root state set of size $2^k$, but more than that is not possible). So in the following, even if we do not state explicitly that our theorems are only true for the regions where $f_{k,r}^A$ is defined, we will implicitly always assume that only these regions are considered. In particular, our proofs only consider these cases.

It is important to note that the initial conditions required for the start of the recursion are not specified in Conjecture \ref{conjecture}. In the following we show that Conjecture \ref{conjecture} is not valid in general, because the case where $r$ is odd is problematic unless the initial conditions are chosen in a specific way, and we also show relevant examples to illustrate this result.

Before we present our results, we want to introduce one last piece of notation, which simplifies all following equations a lot: In the following, $A_i$ always denotes a subset of our alphabet $R$ that contains $i$ elements in total, one of which is $a$, i.e. $A_i \subseteq R$ with $a \in A_i$ and $|A_i| =i$. So in particular, set $A_i$ contains $a$ for all $i$. On the other hand, let $D_j \subseteq R \setminus A_i$ with $|D_j| =j$ and $j=1, \dots, r-i$. In particular, $D_j$ does not contain $a$ for any $j$, and $i+j=r=|R|$.

\section{Results} \label{section2}
\subsection{First insights into the case where $r$ is odd}

Before we present our main result, which is concerned with the case in which the size $r$ of the alphabet under consideration is even, we first consider the case where $r$ is odd and present an example for which Conjecture \ref{conjecture} fails when the initial conditions are not chosen in a particular way. 

\begin{observation} \label{obs} Let $p \in \NN_{\geq 2}$ and $r=2p -1$. Then for $k=p+1$, Conjecture \ref{conjecture} suggests $f_{p+1,r} = 2 \cdot f_{(p+1)-p,r} = 2 \cdot f_{1,r} = 2 \cdot 2 = 4$. This is due to the fact that in order for the root $\rho$ of $T_1$, i.e. the tree that only consists of $\rho$ and two leaves, to be assigned ${a}$, both leaves need to be in state $a$, so $f_{1,r} = 2$ (this scenario is depicted on the left-hand side of Figure \ref{rschlange}). However, $f_{p+1,r}=4$ is incorrect. In fact, $f_{p+1,r}=3$.
\end{observation}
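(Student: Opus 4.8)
The quantity to pin down is $f_{p+1,r}$ for $r=2p-1$ with $p\in\NN_{\geq 2}$; the value $4$ predicted by Conjecture~\ref{conjecture} has already been computed in the statement, so what remains is to prove $f_{p+1,r}=3$. The plan is to establish the two inequalities $f_{p+1,r}\leq 3$ and $f_{p+1,r}\geq 3$ separately, in both cases exploiting the recursion in \eqref{formel} together with the elementary fact that $B*C=\{a\}$ can only arise in the intersection case, i.e.\ when $a\in B$, $a\in C$ and $B\cap C=\{a\}$ (the union case is impossible for a singleton, since the union of two disjoint nonempty sets has at least two elements).

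\textbf{Upper bound.} For $f_{p+1,r}\leq 3$ I would feed two convenient sets into \eqref{formel}: letting the two maximal pending subtrees of $T_{p+1}$ realise the root sets $\{a\}$ and $R$, and using $\{a\}*R=\{a\}\cap R=\{a\}$, gives $f_{p+1,r}\leq f_{p,r}^{\{a\}}+f_{p,r}^{R}$. It then suffices to prove the two auxiliary identities $f_{p,r}^{\{a\}}=2$ and $f_{p,r}^{R}=1$. For the latter I would give an explicit construction on $T_p$ using a single $a$-leaf: one repeatedly splits the target set into two disjoint parts, keeps $a$ on one side and peels the remaining states off into the sibling subtrees, so that every Fitch operation encountered is a disjoint union and the root set grows to all of $R$; this is feasible because $2\leq r=2p-1\leq 2^{p}$. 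For $f_{p,r}^{\{a\}}=2$, the upper bound uses $\{a,x\}*\{a,y\}=\{a\}$ with $x\neq y$, each of the two size-two sets being produced on a $T_{p-1}$ with one $a$-leaf, while the matching lower bound $f_{p,r}^{\{a\}}\geq 2$ follows from the observation that $X_\rho=\{a\}$ forces both pending subtrees to contain $a$ in their Fitch set and hence to carry at least one $a$-leaf each.

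\textbf{Lower bound.} The substance of the argument is $f_{p+1,r}\geq 3$, i.e.\ that two $a$-leaves never suffice. Assume for contradiction that two $a$-leaves produce $X_\rho=\{a\}$ on $T_{p+1}$. Then $X_\rho=B*C=\{a\}$ forces $B\cap C=\{a\}$ with $a\in B$ and $a\in C$, so each of the two $T_p$-subtrees contains $a$ in its root set and therefore at least one $a$-leaf; as there are only two $a$-leaves in total, each subtree carries exactly one. I would then prove the key lemma that on $T_m$ a single $a$-leaf whose root Fitch set $X$ contains $a$ must satisfy $|X|\geq m+1$. This goes by induction on $m$: the lone $a$-leaf sits in one pending subtree, the sibling subtree has no $a$-leaf and hence an $a$-free root set $C$, so $a\in X=B*C$ rules out the intersection case and forces $X=B\cup C$ with $B\cap C=\emptyset$ and $a\in B$; thus $|X|=|B|+|C|\geq|B|+1$, and $|B|\geq m$ by the inductive hypothesis applied to the $a$-side subtree. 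Applying the lemma with $m=p$ gives $|B|,|C|\geq p+1$, whence $|B\cup C|=|B|+|C|-1\geq 2p+1$; but $B\cup C\subseteq R$ has at most $r=2p-1$ elements, a contradiction. Together with the upper bound this yields $f_{p+1,r}=3$.

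\textbf{Main obstacle.} The routine part will be the upper bound; the delicate part is the lower bound and, within it, the size lemma. The point that must be handled carefully is that the presence of $a$ in a root set, combined with an $a$-free sibling, forces Fitch to take the \emph{union} rather than the intersection, so that the $a$-containing set strictly grows by at least one state at every level down to the root. This is exactly what makes any $a$-containing root set produced from a single $a$-leaf too large (size $\geq p+1$) for two such sets to meet in only $\{a\}$ when the alphabet has merely $2p-1$ symbols. Some attention will also be needed for the small values of $k$ and for the size constraints $r\leq 2^{p}$ that guarantee the constructions actually fit inside $T_p$.
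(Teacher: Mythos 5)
Your lower bound is correct and in fact supplies something the paper does not: the paper only exhibits the three-$a$ construction and dismisses the optimality of $3$ with ``calculation not shown''. Your key lemma --- that a single $a$-leaf on $T_m$ can only produce an $a$-containing root Fitch set of size at least $m+1$, because the $a$-free sibling at every level forces a disjoint union --- together with the count $|B|+|C|-1\geq 2(p+1)-1>2p-1=r$ is a clean, complete proof that two $a$-leaves never give $X_\rho=\{a\}$.

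The gap is in your upper bound, namely the sub-claim that $f_{p,r}^{\{a\}}\leq 2$ can be realized as $\{a,x\}*\{a,y\}$ with each size-two set produced on a $T_{p-1}$ from a single $a$-leaf. This is false for every $p\geq 3$: by Theorem \ref{3} and Lemma \ref{lemma1} (iii), $f_{p-1,r}^{A_2}=f_{p-2,r}\geq 2$ once $p-2\geq 1$, and indeed your own key lemma rules it out, since one $a$-leaf on $T_{p-1}$ forces any $a$-containing root set to have size at least $p>2$. The repair --- and this is exactly where the hypothesis $r=2p-1$ does its work --- is the paper's construction: the two $T_{p-1}$ subtrees must be assigned the two \emph{size-$p$} sets $\{a,c_1,\dots,c_{p-1}\}$ and $\{a,c_p,\dots,c_{2p-2}\}$, each obtainable from a single $a$-leaf by the peeling construction of Figure \ref{paperfig}, and these intersect in exactly $\{a\}$ precisely because $2p-1=r$ states are available. (Your other sub-claim, $f_{p,r}^{R}=1$, is fine: the sibling subtrees of heights $0,\dots,p-1$ can carry pairwise disjoint nonempty $a$-free sets of sizes summing to $2p-2$, since $p\leq 2p-2\leq 2^p-1$ for $p\geq 2$; the paper instead puts the size-$(p+1)$ set $A_{p+1}$ on that subtree, which is the same idea.) With the corrected realization of $\{a\}$ on one $T_p$ via two $a$-leaves, your decomposition $f_{p+1,r}\leq f_{p,r}^{\{a\}}+f_{p,r}^{R}=2+1=3$ goes through and matches the paper's construction.
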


We will now prove Observation \ref{obs} and illustrate the construction presented in the proof subsequently with the explicit example where $r=3$.

\begin{proof} \mbox{}
Our proof strategy is depicted in Figure \ref{proofidea} and works as follows:
\begin{figure}[H]
\centering
\includegraphics[scale=0.6]{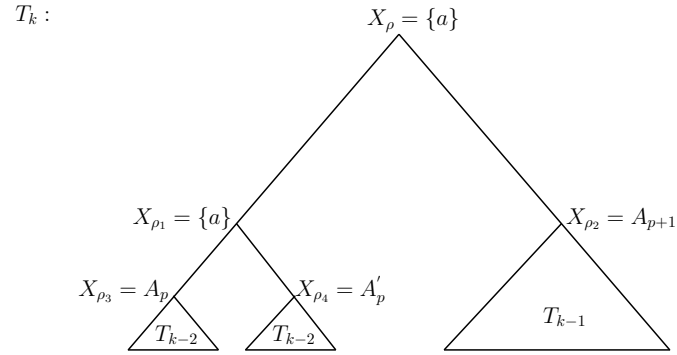}
\caption{This figure illustrates the idea of the proof: We consider $T_k$ and its standard decomposition into two trees $T_{k-1}$, for one of which we consider its standard decomposition of two trees $T_{k-2}$.}
\label{proofidea}
\end{figure}
We consider tree $T_k$ and its standard decomposition into two trees $T_{k-1}$. We will construct a character such that one of the two $T_{k-1}$ trees is assigned $A_{p+1}$ as MP root estimate, i.e. $p+1$ distinct character states, one of which is $a$. For the other subtree $T_{k-1}$, we again consider its standard decomposition into two subtrees $T_{k-2}$, both of which we will assign sets $A_p$ and $A_p'$, respectively, i.e. each subtree $T_{k-2}$ gets assigned a set with $p$ elements (one of which is $a$) to its root. The trick is that as we have $r=2p-1$ states, we can achieve the latter by letting $A_p=\{a,c_1,\ldots,c_{p-1}\}$ and $A_p'=\{a,c_p,\ldots,c_{2p-2}\}$, where $R=\{a,c_1,\cdots,c_{2p-2}\}$ and thus $|R|=r=2p-1$, and so $A_p \cap A_p' = \{a\}$. Thus, the root of $T_{k-1}$ would have set $\{a\}$ as a root estimate, and $T_k$ would have the root estimate $\{a\} * A_{p+1}=\{a\} \cap A_{p+1}=\{a\}$ as required. We will present a construction that can achieve this by assigning $a$ to only three leaves in total, one in one of the $T_{k-1}$ subtrees and one in each $T_{k-2}$ subtree of the other $T_{k-1}$ subtree as described above. 

\par \vspace{0.5cm}
So we first consider tree $T_{k-1}=T_p$ and consider its subtrees as depicted in Figure \ref{paperfig}. For this tree, we construct a character that only employs one $a$ and leads to an MP root assignment of $A_{p+1}$, i.e. of a set with $p+1$ elements, one of which is $a$. The construction is as follows: We assign leaf $v$ with $a$ and all leaves in subtrees $T_0, \ldots, T_{p-1}$ are all assigned only one of the $r=2p-1$ character states, but each tree gets a unique one. So in total, the character uses $p+1$ distinct character states, say $a,c_1,\ldots, c_{p}$, and it can easily be seen that the Fitch algorithm will lead to the root assignment $X_\rho = \{a,c_1,\ldots, c_{p}\}=A_{p+1}$. \\ \par
Note that we can use the same construction to assign sets $A_p$ and $A_{p}'$ to trees $T_{k-2}$, where we can choose these two sets such that they intersect only in $a$ as explained above. 

So in total, we have used $a$ three times and derive $X_\rho=\{a\}$. This already contradicts $f_{p+1,r}=4$ as suggested by Conjecture \ref{conjecture}. Moreover, one can show that 3 is best possible, i.e. there is no character assigning $a$ to only two leaves and giving $X_\rho=\{a\}$ (calculation not shown). This completes the proof. \qed
\end{proof}

\begin{figure}[H]
\centering
\includegraphics[scale=0.3]{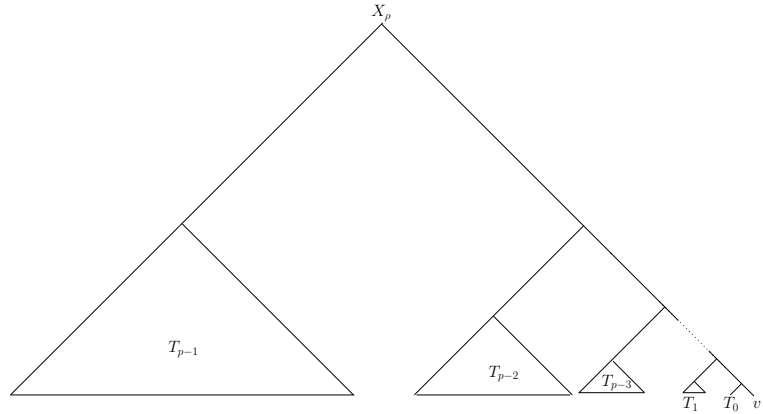}
\caption{The decomposition of the fully bifurcating tree $T_p$ of height $p$ as needed in the construction presented in the proof of Observation \ref{obs} and later in the proof of Lemma \ref{lemmar2p}. In this construction, leaf $v$ is assigned $a$, $T_0$ is assigned another state, and all other subtrees which are sketched here as triangles are assigned one state each such that all states used are pairwise distinct.}
\label{paperfig}
\end{figure}

Note that the problem for $r=2p-1 \geq 3$ as described in the construction of our proof of  Observation \ref{obs} could be avoided if the recursion was started with $f_{p,r}=2$ and $f_{p+1,r}=3$ as initial conditions (rather than $f_{1,r}=2$ and $f_{2,r}=3$). So a careful choice of initial conditions is definitely important when $r$ is odd, and for trees whose height is too small, the recursion is thus not applicable. However, note that the construction we used is not the only way to let Conjecture \ref{conjecture} fail when $r$ is odd -- in fact, there are also other examples, where the conjecture fails (result not shown). This is why in Section \ref{3-state} we will analyze the case $r=3$ more in-depth and provide some simplifications concerning $f_{k,3}$.

We complete this section by providing an example in order to illustrate the construction presented in the proof of Observation \ref{obs}.

\begin{example}
Let $R=\{ a,b,c \}$, i.e. $r=3$. Let $p=2$, and thus $r=2p-1$ as required. By Conjecture \ref{conjecture} we would have for all $k \geq 2$ that $f_{k,3}= 2 \cdot f_{k-2,3}$. Hence $f_{3,3}=2 \cdot f_{3-2,3} = 2 \cdot f_{1,3} = 2 \cdot 2 =4$, since $f_{1,r}=2$ for all $r$ (see Figure \ref{rschlange}). \\
Consider $T_3$ with the character $f=abcabbca$ as depicted in Figure \ref{t_3}. This assignment of states to the leaves shows that $f_{3,3} \leq 3$. Note that an exhaustive search through all possible characters even confirms that $f_{3,3}=3$ (calculation not shown). In any case, this shows that Conjecture \ref{conjecture}, which would lead to $f_{3,3}=4$, is not correct.
\begin{figure}
\centering
\includegraphics[scale=0.6]{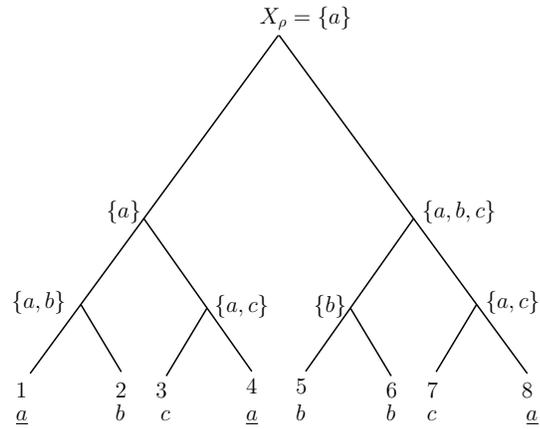} 
\caption{$f_{3,3} \leq f_{2,3}^{\{a\}}+f_{2,3}^{\{a,b,c\}}=2+1=3$, since two leaves have to be assigned $a$ to obtain $\{a\}$ for the root of the left subtree of height 2 and one leaf have to be assigned $a$ to obtain $\{a,b,c\}$ for the root of the right subtree of height 2. Applying the parsimony operation for these two sets results in $X_\rho=\{a\}$.}
\label{t_3}
\end{figure}

\end{example}

So it is already obvious that calculating $f_{k,r}$ recursively is not trivial and that Conjecture \ref{conjecture} does not in general hold when $r$ is odd. However, the present manuscript has two main aims: We want to prove that the conjecture holds for the case where $r$ is even, and we want to prove a recursion for the case where $r=3$. Concerning the second aim, we are now already in the position to state the desired result, which we will prove  subsequently in Section \ref{3-state}. 

\begin{maintheorem}  \label{maintheo3}
Let $T_k$ be a fully bifurcating tree of height $k$ and let $R$ be a set of character states with $|R|=r=3$. Then for all $k \geq 3$ we have
\begin{equation*}
f_{k+1,3} = 2 \cdot f_{k,3}^{A_2} = 2 \cdot f_{k-1,3}.
\end{equation*}
Moreover $f_{2,3}=2$ and $f_{3,3}=3$.
\end{maintheorem}

Note that this theorem basically states that Conjecture \ref{conjecture} is correct for $r=3$ as long as the initial conditions are chosen in the way stated in the theorem. So even if we cannot make a general statement about the involved case where $r$ is odd, this theorem characterizes the cases for which the conjecture holds when $r=3$. 

The proof of this theorem requires some general properties of $f_{r,k}^A$, which will later also be needed to address the case where $r$ is even. So now we first turn our attention to some general properties of $f_{r,k}^A$, some of which will be proved in the appendix.

\subsection{General properties of $f_{k,r}^A$}
We now investigate some general properties of $f_{k,r}^A$. These properties will be used later on to prove our main theorems. However, as we will point out, some properties of $f_{k,r}^A$ are rather intuitive, even if they come with technical proofs. These proofs can be found in the appendix. 

Our main aim in this section is to simplify Equation \eqref{formel} in the following sense: Rather than taking the minimum of all subsets $B$ and $C$ such that $B,C\subseteq R$ and $|B|, |C| \leq 2^k$, we want to see if some sets can already be discarded as they can never lead to the minimum. 

Some first properties of $f_{k,r}^A$ are stated in the following lemma and can be derived from \eqref{formel}.

\begin{lemma}\label{lemma1}
Let $A \subseteq R$, $r=|R|$ and $f_{k,r}^A$ be as defined in \eqref{formel}. Then, the following statements hold: 
\begin{itemize}
\item[(i)] If $a \in A \cap B$ and $|A| = |B|$, then 
$f_{k,r}^A=f_{k,r}^B$.
\item[(ii)] If $k=0$, then $f_{0,r}^{\{a\}}=f_{0,r}=1$ for all $r \in \mathbb{N}$.
\item[(iii)] If $k \geq 1$, then $f_{k,r}^{\{a\}}=f_{k,r} \geq 2$ for all $r \in \mathbb{N}$. 
\end{itemize}
\end{lemma}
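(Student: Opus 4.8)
The plan is to establish each of the three statements of Lemma~\ref{lemma1} directly from the recursive definition in \eqref{formel} together with the initial data of the Fitch algorithm. These are all foundational facts, so I would aim for short, self-contained arguments rather than any clever machinery.

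For part~(i), the key observation is that the value $f_{k,r}^A$ depends only on the size of $A$ and on the fact that $a \in A$, not on the particular other elements of $A$. First I would note that the Fitch recursion and the parsimony operation $*$ treat all character states symmetrically: they only ever form intersections and unions, never distinguishing one letter from another. Formally, I would argue by applying a bijection (a permutation $\sigma$ of $R$ fixing $a$ and mapping $A$ onto $B$, which exists precisely because $a \in A \cap B$ and $|A|=|B|$) to any optimal character realizing $f_{k,r}^A$; this produces a character of the same cost realizing root state set $B$, giving $f_{k,r}^B \le f_{k,r}^A$, and the reverse inequality follows by the symmetric argument. I would probably phrase this as an induction on $k$ using \eqref{formel} directly, since relabeling commutes with the minimization over $B,C$, but the permutation argument is the conceptual heart.

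Parts~(ii) and~(iii) are about the base and near-base cases. For~(ii), when $k=0$ the tree $T_0$ consists of a single node that is simultaneously root and leaf, so to obtain $X_\rho=\{a\}$ exactly one leaf must be assigned $a$, whence $f_{0,r}=1$. For~(iii), with $k \ge 1$ the root $\rho$ has two subtrees, and I would use \eqref{formel} to write $f_{k,r}^{\{a\}}=\min\{f_{k-1,r}^B+f_{k-1,r}^C : B*C=\{a\}\}$. Since $B*C=\{a\}$ forces $a \in B$ and $a \in C$ (the intersection case) or a union equal to a singleton (impossible for nonempty disjoint sets unless both equal $\{a\}$), each of $B$ and $C$ must contain $a$, so each summand is at least $f_{k-1,r}^{\{a\}} \ge \cdots \ge f_{0,r}=1$ by monotone descent, giving a sum of at least $2$. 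A clean way to see the lower bound is simply that obtaining $a$ in the root estimate of a two-leaf-or-larger tree requires $a$ to appear in both subtrees' estimates, hence at least one leaf in each subtree must carry $a$.

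I expect the main obstacle to be making part~(i) rigorous without hand-waving the symmetry: one must be careful that the relabeling bijection interacts correctly with the constraint $|B|,|C|\le 2^k$ in the minimization of \eqref{formel} and with the requirement that the reconstructed root set be exactly $A$ (not merely contain $a$). The cleanest route is an induction on $k$, checking that permuting $R$ by a bijection fixing $a$ sends any pair $(B,C)$ with $B*C=A$ to a pair $(\sigma B,\sigma C)$ with $(\sigma B)*(\sigma C)=\sigma A$ while preserving the values $f_{k-1,r}^{B}=f_{k-1,r}^{\sigma B}$ by the inductive hypothesis; the base case $k=0$ is immediate since $f_{0,r}^A$ equals $|A|$ (each state in $A$ needs its own leaf). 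Parts~(ii) and~(iii) should be routine once the behavior of $*$ on singletons is spelled out.
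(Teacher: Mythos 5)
Your proposal matches the paper's proof in all three parts: (i) is the same relabeling argument (the paper simply ``renames'' the states of $A$ not in $A\cap B$, which is your permutation $\sigma$ fixing $a$), (ii) is the same single-node observation, and (iii) is the same reduction showing $B*C=\{a\}$ forces the intersection case and hence $a\in B$ and $a\in C$. One small slip in (iii): the chain ``each summand is at least $f_{k-1,r}^{\{a\}}\ge\cdots$'' is backwards, since Theorem~\ref{1} gives $f_{k-1,r}^{B}\le f_{k-1,r}^{\{a\}}$ when $a\in B$ --- but your alternative justification, that each subtree must contain at least one leaf in state $a$ so each summand is at least $1$, is exactly the (correct) argument the paper uses.
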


A formal proof of Lemma \ref{lemma1} can be found in the appendix. However, note that the first two of the three statements of the lemma are not surprising: The first statement says that if you have two sets $A$ and $B$ which both contain $a$ and have the same size, then you need to assign the letter $a$ to equally many leaves in order to receive $A$ or $B$ as an MP root state estimate. This makes sense because the exact names of the states other than $a$ in a set do not matter; what matters is rather the number of other states in the respective set. The second statement says that the fully bifurcating tree of height 0, i.e. a tree that consists only of one node, requires this one node to be assigned $a$ if the root state estimate shall be $\{a\}$ -- this is clear as the root and the only leaf coincide in this case. The third statement is maybe the most interesting and least obvious one of the three statements. It says that if the root estimate shall be an unambiguous $\{a\}$, any fully bifurcating tree of height at least 1 requires at least two leafs that are assigned $a$.

Next we state a theorem which might seem obvious at first, too, but which is really crucial: If  you have two sets $A, B \subseteq R$, which both contain $a$, and if you want to obtain the {\it smaller} set as an MP root state estimate, {\it more} leaves have to be assigned $a$. 
\begin{theorem}\label{1}
Let $k\geq 1$ and $A, B \subseteq R$ such that $|A| \geq |B|$ with $a \in A \cap B$ and $|R|=r \geq 2$. Then, we have: $f_{k,r}^A \leq f_{k,r}^B$. 
\end{theorem}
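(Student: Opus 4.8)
The plan is to prove the statement by induction on the height $k$, after first reducing it to the special case in which $A$ and $B$ differ by a single element. For the reduction, note that by Lemma \ref{lemma1}(i) the value $f_{k,r}^A$ depends only on $|A|$ (as long as $a \in A$), so without loss of generality we may assume $B \subseteq A$. Writing $|B| = j$ and $|A| = i$, we may then interpolate a chain $B = A_j \subsetneq A_{j+1} \subsetneq \cdots \subsetneq A_i = A$ of subsets of $A$, each obtained from the previous one by adjoining one further element and each containing $a$. Hence it suffices to establish the single-step inequality
\[ f_{k,r}^{A_{i+1}} \le f_{k,r}^{A_i} \]
for every valid $i$ and $k$, since chaining these inequalities along the tower yields $f_{k,r}^A \le f_{k,r}^B$.

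I would prove this single-step inequality by induction on $k$. The base case $k=1$ is a finite check: on $T_1$ the only attainable root estimates containing $a$ are $\{a\}$ (forcing both leaves to be $a$, so $f_{1,r}^{A_1}=2$) and the two-element sets (needing a single $a$, so $f_{1,r}^{A_2}=1$), and indeed $f_{1,r}^{A_2} = 1 \le 2 = f_{1,r}^{A_1}$. For the inductive step, assume Theorem \ref{1} holds at height $k$ and consider height $k+1$. Let $B,C \subseteq R$ realise the minimum in \eqref{formel} for $A_i$, so that $B * C = A_i$ and $f_{k+1,r}^{A_i} = f_{k,r}^B + f_{k,r}^C$. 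The idea is to modify this pair into an admissible pair $B', C'$ with $B' * C'$ equal to some size-$(i+1)$ set containing $a$ (which by Lemma \ref{lemma1}(i) has the same $f$-value as $A_{i+1}$) and with $f_{k,r}^{B'} + f_{k,r}^{C'} \le f_{k,r}^B + f_{k,r}^C$; feeding this pair back into \eqref{formel} then gives $f_{k+1,r}^{A_{i+1}} \le f_{k+1,r}^{A_i}$. The construction splits by the type of $B*C$. In the union case $B * C = B \cup C$ (disjoint), $a$ lies in exactly one part, and I adjoin the new state to a part that still has fewer than $2^k$ elements, preserving disjointness: adjoining to the $a$-free part $D_j$ keeps its cost at $0$, while adjoining to the $a$-part does not increase its cost by the induction hypothesis at height $k$. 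In the intersection case $B \cap C = A_i$, assuming (say) $B \supsetneq A_i$, I reuse an element $d \in B \setminus A_i$ and set $C' := C \cup \{d\}$, so that $B \cap C' = A_i \cup \{d\}$ has size $i+1$ and the cost of $C'$ does not exceed that of $C$, again by the height-$k$ hypothesis.

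The step requiring the most care — and the one I expect to be the main obstacle — is the handling of the boundary and degenerate configurations in which this naive adjunction is blocked: namely when the part to be enlarged already has the maximal admissible size $2^k$, or when $B = C = A_i$ so that neither part strictly contains $A_i$. In such situations one cannot simply grow an existing part, and instead I would exhibit an explicit admissible decomposition of a size-$(i+1)$ set of no larger cost, using the easily verified boundary values $f_{k,r}^{D_j} = 0$ (a root estimate not containing $a$ can be realised without using $a$ at all) and $f_{k,r}^{A_{2^k}} = 1$ (a maximal root estimate forces all $2^k$ leaf states to be pairwise distinct, exactly one being $a$). Verifying that these explicit decompositions respect the size bound $|B'|,|C'| \le 2^k$ and that enough distinct states remain available in $R$ to maintain the required intersection or union is the genuinely fiddly part of the argument, even though each individual check is routine.
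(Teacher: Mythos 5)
Your proposal is correct and follows essentially the same route as the paper: induction on $k$, the reduction to nested sets via Lemma \ref{lemma1}(i), and a case split according to whether the optimal standard decomposition produces the root set as a disjoint union or as an intersection, with the height-$k$ hypothesis applied after enlarging one of the two subtree sets. The differences are presentational rather than substantive: you interpolate through single-element steps and build a cheap witness decomposition for the larger set, whereas the paper lower-bounds the optimal decomposition of the smaller set in one step via Equation \eqref{fkrA}; and you are more explicit about the cardinality constraints $|B'|,|C'|\le 2^k$ (and the resulting degenerate configurations), which the paper absorbs into its blanket convention of only considering regions where $f_{k,r}^A$ is defined.
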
 
Note that, again, the statement of Theorem \ref{1} is intuitively clear, because, roughly speaking, the smaller the set of the MP root state estimate is, the more Maximum Parsimony  is convinced that $a$ is a good root state candidate. For instance, if $a$ is one state of only a few possible root states (i.e. many other states can be discarded as possible root state candidates), this outcome might be regarded as more relevant than if $a$ is one state of many possible states (i.e. only a few other states can be discarded). In order to achieve this higher degree of certainty, more leaves with $a$ are needed than if a smaller degree of certainty is sufficient.

We will now give a formal proof of Theorem \ref{1}, as not only the theorem itself, but also some aspects of the proof, are needed to establish subsequent results.

\begin{proof}
If $|A|=|B|$, then $f_{k,r}^A = f_{k,r}^B$ (see Lemma \ref{lemma1} (i)), so there remains nothing to show. Now assume $|A|>|B|$.  Note that without loss of generality, we may assume $B  \subset A$. This is due to the fact that if there were states in $B$ that are not contained in $A$, i.e. if $B \setminus (A \cap B)\neq \emptyset$, then we can re-name these states using states of set $A$ to get a set $B'$ with $|B'|=|B|$ and $a\in B'$. Then, by Lemma \ref{lemma1} (i), $f_{k,r}^B= f_{k,r}^{B'}$, but also $B' \subset A$. So we now assume that this re-naming of states in $B$ has already taken place, so without loss of generality, $B \subset A$. 

We proceed by induction on $k$. \\
For $k=1$ we have $|A|\leq 2^k=2$, because the number of states that can occur in any MP root state estimate is clearly limited by the number of leaves (by Definition \ref{parsimonyoperation} of the parsimony operation), as no states that occur in none of the leaves can ever enter the root state estimate. Moreover, it can easily be seen that $f_{1,r}^{\{a\}}=2$ and $f_{1,r}^{\{a,b\}}=1$ hold for all $r\geq 2$ (see Figure \ref{rschlange}) and therefore Theorem \ref{1} is true for $k=1$.
\begin{figure}[H]
\centering
\includegraphics[scale=0.5]{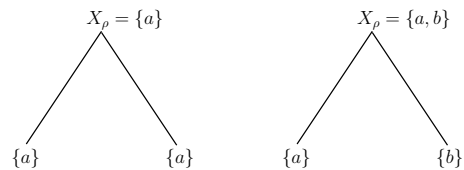}
\caption{Fully bifurcating trees of height 1 with $X_{\rho}=\{a\}$ and $X_{\rho}=\{a,b\}$ (w.l.o.g. $A_2=\{a,b\}$). To obtain $X_{\rho}=\{a\}$ both leaves have to be assigned $a$. To obtain $X_{\rho}=\{a,b\}$, one leaf has to be assigned $a$ and the other one $b$.}
\label{rschlange}
\end{figure}
Now suppose Theorem \ref{1} is true for $k$. Then we consider $k+1$ and use Equation \eqref{formel} to describe $f_{k+1,r}^A$ as follows:
\begin{align}
f_{k+1,r}^A &= \min
\begin{cases}
f_{k,r}^{A \cup S}+f_{k,r}^{A \cup \widehat{S}}: &S \cap \widehat{S}= \emptyset, S \cap A = \emptyset, \widehat{S} \cap A = \emptyset \\
&\text{and } |S| + |\widehat{S}| \leq r- |A|. \\
\\
f_{k,r}^{C_1}+f_{k,r}^{C_2}: &C_1, C_2 \neq \emptyset, C_1 \cap C_2 = \emptyset \text{ and } C_1 \cup C_2 =A\\
&\text{w.l.o.g. } a \in C_1. \\ 
\end{cases} \nonumber \\
&\qquad \qquad \text{(where the minimum is taken over all  $C_1, C_2, S, \widehat{S}$, respectively)} \nonumber \\
&= \min
\begin{cases}
f_{k,r}^{A \cup S}+f_{k,r}^{A \cup \widehat{S}}: &S \cap \widehat{S}= \emptyset, S \cap A = \emptyset, \widehat{S} \cap A = \emptyset \\
&\text{and } |S| + |\widehat{S}| \leq r- |A|. \\
\\
f_{k,r}^{C_1}: &a \in C_1, C_1 \subset A \text{ and } 1 \leq |C_1| \leq |A| -1.  \\
\end{cases} \nonumber \\
&\qquad \qquad \text{(since } a \not\in C_2 \text{ and therefore } f_k^{C_2}=0 \text{)} \nonumber \\
&= \min
\begin{cases}
f_{k,r}^{A \cup S}+f_{k,r}^{A \cup \widehat{S}}: &S \cap \widehat{S}= \emptyset, S \cap A = \emptyset, \widehat{S} \cap A = \emptyset \\
&\text{and } |S| + |\widehat{S}| \leq r- |A|. \\
\\
f_{k,r}^{C_1}: &a \in C_1, C_1 \subset A \text{ and } |C_1| = |A| -1.  \\
\end{cases} \label{fkrA}\\
&\qquad \qquad \text{(as  Theorem \ref{1} holds for $k$)} \nonumber. 
\end{align}
Analogously, we derive 
\begin{align*}
f_{k+1,r}^B
&= \min
\begin{cases}
f_{k,r}^{B \cup T}+f_{k,r}^{B \cup \widehat{T}} &T \cap \widehat{T}= \emptyset, T \cap B = \emptyset, \widehat{T} \cap B = \emptyset \\ 
&\text{and } |T| + |\widehat{T}| = r- |B|. \\
\\
f_{k,r}^{B_1} &a \in B_1, B_1 \subset B \text{ and } |B_1| = |B| -1.  \\
\end{cases} \nonumber \\
&\qquad \qquad \text{(where the minimum is taken over  $B_1, T, \widehat{T}$)} \nonumber 
\end{align*}

So for $f_{k+1,r}^{B}$, there are basically two cases, depending on whether this value is obtained by its two maximal pending subtrees in the standard decomposition by an intersection or a union. We consider these two cases for $f_{k+1,r}^B$ separately. \\
\textbf{$1^{\text{st}}$ case:} $f_{k+1,r}^B = f_{k,r}^{B_1}$ 
with $a \in B_1$, $B_1 \subset B$ and $|B_1| = |B| -1$. \\ Furthermore, we have that $A \subseteq R$ with $B \subset A$. Moreover, let $C_1 \subset A$ with $a \in C_1$ and $|C_1| = |A| -1$. Note that then $|B_1|<|C_1|$.  We conclude: 
\begin{align*}
f_{k+1,r}^B &= f_{k,r}^{B_1} \\
&\geq f_{k,r}^{C_1} &&\text{as Theorem \ref{1} holds for $k$}  \\
&\geq f_{k+1,r}^A &&\text{by  \eqref{fkrA}. } 
\end{align*}
\textbf{$2^{\text{nd}}$ case:} $f_{k+1,r}^B = f_{k,r}^{B \cup T}+f_{k,r}^{B \cup \widehat{T}}$
with $T \cap \widehat{T}= \emptyset$, $T \cap B = \emptyset$, $\widehat{T} \cap B = \emptyset$ and $|T| + |\widehat{T}| = r- |B|$. \\
\noindent
Moreover, we have $A \subseteq R$ and $B \subset A$. Now, as we have $B \subset A$, we can conclude $B \cup T \subseteq A \cup T \text{ for all } T
\text{ and }
B \cup \widehat{T} \subseteq A \cup \widehat{T} \text{ for all } \widehat{T}.$
Hence
\begin{align*}
f_{k+1,r}^B &= f_{k,r}^{B \cup T}+f_{k,r}^{B \cup \widehat{T}} \\
&\geq f_{k,r}^{A \cup T}+f_{k,r}^{A \cup \widehat{T}} &&\text{as Theorem \ref{1} holds for $k$}  \\
&\geq f_{k+1,r}^A &&\text{by  \eqref{fkrA}. }
\end{align*}
So in both cases, we conclude $f_{k+1,r}^A \leq f_{k+1,r}^B$, which completes the proof.
\qed
\end{proof}
The proof of Theorem \ref{1} (particularly Equation \eqref{fkrA}) helps us directly to gain more insight into $f_{k+1,r}^{A_i}$, which we will summarize in the following corollary.

\begin{col}\label{col1}
The following recursions hold: 
\begin{align*}
&f_{k+1,r}^{A_r}= f_{k+1,r}^R = \min\{f_{k,r}^{A_{r-1}}, 2 \cdot f_{k,r}^{A_r} \}, \\ 
&f_{k+1,r}^{A_{r-1}} = \min\{f_{k,r}^{A_{r-2}}, f_{k,r}^{A_{r-1}}+f_{k,r}^{{A_{r-1}} \cup D_1} \}, \\
&f_{k+1,r}^{A_{r-2}} = \min\{f_{k,r}^{A_{r-3}}, f_{k,r}^{A_{r-2}}+f_{k,r}^{{A_{r-2}} \cup D_2}, 2 \cdot f_{k,r}^{{A_{r-2}} \cup D_1} \}, \\
&\vdots \\
&f_{k+1,r}^{A_1}=f_{k+1} = \min\{f_{k,r}^{A_{1}}+f_{k,r}^{A_{1} \cup D_{r-1}}, f_{k,r}^{A_{1} \cup D_1}+f_{k,r}^{{A_{1}} \cup D_{r-2}}, f_{k,r}^{{A_{1}} \cup D_2}+f_{k,r}^{{A_{1}} \cup D_{r-3}}, \dots, F^{A_1} \}, \\
&\qquad \quad \text{where } F^{A_1} \coloneqq \begin{cases}
f_{k,r}^{A_1 \cup D_{p-1}}+f_{k,r}^{A_1 \cup D_{p}} &\text{if } r=2p \\
2 \cdot f_{k,r}^{A_1 \cup D_{p-1}}  &\text{if } r=2p-1
\end{cases} 
.
\end{align*}
\end{col}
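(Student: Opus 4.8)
The plan is to read off each recursion directly from Equation \eqref{fkrA}, which was already established inside the proof of Theorem \ref{1} and which expresses $f_{k+1,r}^A$ as the minimum of an ``intersection branch'' and a ``union branch''. Specializing $A=A_i$, I would first treat the union branch: it contributes the single term $f_{k,r}^{C_1}$ with $a\in C_1$, $C_1\subset A_i$ and $|C_1|=i-1$, which by Lemma \ref{lemma1}(i) equals $f_{k,r}^{A_{i-1}}$. This explains the leading terms $f_{k,r}^{A_{r-1}}, f_{k,r}^{A_{r-2}}, f_{k,r}^{A_{r-3}},\dots$ in each line. For $i=1$ this branch is empty, since there is no $C_1$ with $a\in C_1$ and $|C_1|=0$; this is exactly why the recursion for $f_{k+1,r}^{A_1}$ carries no such leading term.

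Next I would expand the intersection branch. Here $A_i$ is enlarged by two disjoint sets $S,\widehat{S}\subseteq R\setminus A_i$ with $|S|+|\widehat{S}|\le r-i$. Writing $|S|=s$ and $|\widehat{S}|=t$, each choice contributes $f_{k,r}^{A_i\cup D_s}+f_{k,r}^{A_i\cup D_t}$, and by Lemma \ref{lemma1}(i) this value depends only on $s$ and $t$, not on which particular states are adjoined. Thus the intersection branch reduces to minimizing $g(s)+g(t)$ over all $s,t\ge 0$ with $s+t\le r-i$, where $g(s):=f_{k,r}^{A_i\cup D_s}$. The crucial simplification then comes from Theorem \ref{1}: since $A_i\cup D_s\subseteq A_i\cup D_{s'}$ whenever $s\le s'$, the function $g$ is non-increasing. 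Consequently, for any admissible pair with $s+t<r-i$ I can enlarge $t$ to $r-i-s$ without increasing $g(t)$, so the minimum is always attained on the boundary $s+t=r-i$. After using the symmetry $s\leftrightarrow r-i-s$, this leaves exactly the pairs $(s,r-i-s)$ for $s=0,1,\dots,\lfloor (r-i)/2\rfloor$, which is precisely the list appearing in each line; in particular, for $A_1$ (where $r-i=r-1$) the central pair is $(p-1,p)$ when $r=2p$ and $(p-1,p-1)$ when $r=2p-1$, yielding the two cases in the definition of $F^{A_1}$.

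The main obstacle is the boundary/domination argument, and specifically justifying that no further terms can be discarded. Monotonicity of $g$ lets me push every admissible pair onto the line $s+t=r-i$, but it does not let me compare two such boundary pairs with each other (that would require convexity of $g$, which is not available); this is why \emph{all} of the pairs $(s,r-i-s)$ must be retained in the minimum. I would spell this out explicitly, so that the reader sees both that the listed terms suffice and that none of them is redundant in general — the line for $A_{r-2}$, where $2\cdot f_{k,r}^{A_{r-2}}$ and $f_{k,r}^{A_{r-2}}+f_{k,r}^{A_{r-2}\cup D_1}$ are dominated but $f_{k,r}^{A_{r-2}}+f_{k,r}^{A_{r-2}\cup D_2}$ and $2\cdot f_{k,r}^{A_{r-2}\cup D_1}$ survive, is the smallest representative example of this phenomenon.

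The remaining work is routine bookkeeping built on Lemma \ref{lemma1}(i) and Theorem \ref{1}: checking the endpoint indices, verifying the parity split defining $F^{A_1}$, and handling the degenerate cases $i=r$ (zero intersection slack, so only $(0,0)$ remains and gives the term $2\cdot f_{k,r}^{A_r}$) and $i=1$ (empty union branch). Since each line of the corollary is an instance of the same specialization of \eqref{fkrA}, I would present the argument once for a general $A_i$ and then simply record the resulting expressions for $i=r,r-1,r-2,\dots,1$.
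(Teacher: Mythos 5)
Your proposal is correct and follows exactly the route the paper intends: the paper omits the proof of Corollary \ref{col1} as a ``direct consequence of \eqref{fkrA}'', and your argument is precisely that specialization, with Lemma \ref{lemma1}(i) reducing the dependence to set sizes and the monotonicity of Theorem \ref{1} pushing the intersection branch onto the boundary $|S|+|\widehat{S}|=r-|A_i|$. You in fact make explicit the one step the paper leaves implicit (why pairs with $|S|+|\widehat{S}|<r-|A_i|$ and non-boundary terms may be discarded while none of the boundary pairs may), which is a welcome clarification rather than a deviation.
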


As Corollary \ref{col1} is a direct consequence of \eqref{fkrA}, we omit the formal proof. Instead, we turn our attention to another rather intuitive property of $f_{k,r}^A$: namely, that as $k$ (and thus the tree under consideration) grows, the number of leaves needed for an MP root state estimate of $A$ cannot decrease. 

\begin{theorem}\label{2}For all regions where $f_{k,r}^A$ and  $f_{k+1,r}^A$ are defined, the following inequality holds: $
f_{k,r}^A \leq f_{k+1,r}^A.$ That is, $f_{k,r}^A$ is monotonically increasing in $k$ for all $A \subseteq R$.
\end{theorem}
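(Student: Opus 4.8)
The plan is to prove the inequality by induction on the height $k$, using the refined recursion \eqref{fkrA} derived in the proof of Theorem \ref{1} as the main engine. For the base case $k=0$ the quantity $f_{0,r}^A$ is only defined for $A=\{a\}$ (as $|A|\le 2^0=1$), and Lemma \ref{lemma1} gives $f_{0,r}^{\{a\}}=1$ together with $f_{1,r}^{\{a\}}\ge 2$, whence $f_{0,r}^{\{a\}}\le f_{1,r}^{\{a\}}$. For the inductive step I would assume that $f_{k-1,r}^{A'}\le f_{k,r}^{A'}$ holds for every admissible $A'$ and deduce $f_{k,r}^A \le f_{k+1,r}^A$ for every admissible $A$, expanding $f_{k+1,r}^A$ according to \eqref{fkrA}. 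Since \eqref{fkrA} writes $f_{k+1,r}^A$ as the minimum of two types of terms, there are two cases depending on which term realizes the minimum.

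In the first case the minimum is attained by a term of the form $f_{k,r}^{C_1}$ with $a\in C_1$, $C_1 \subset A$ and $|C_1|=|A|-1$. Then $|A|>|C_1|$ and $a\in A\cap C_1$, so Theorem \ref{1} (applied at height $k$) immediately yields $f_{k,r}^A \le f_{k,r}^{C_1} = f_{k+1,r}^A$, and nothing further is required. In the second case the minimum is attained by a term $f_{k,r}^{A\cup S}+f_{k,r}^{A\cup\widehat S}$ with $S\cap\widehat S=\emptyset$ and $S\cap A=\widehat S\cap A=\emptyset$. The key observation here is that $(A\cup S)*(A\cup \widehat S)=A\cup(S\cap\widehat S)=A$, so the pair $(A\cup S,\,A\cup\widehat S)$ is an admissible choice in the recursion \eqref{formel} for $f_{k,r}^A$ (with $k$ replaced by $k-1$). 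This gives $f_{k,r}^A \le f_{k-1,r}^{A\cup S}+f_{k-1,r}^{A\cup\widehat S}$, and applying the induction hypothesis to each summand converts the right-hand side into $f_{k,r}^{A\cup S}+f_{k,r}^{A\cup\widehat S}=f_{k+1,r}^A$, as desired.

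I expect the main obstacle to be the bookkeeping about which quantities are actually defined in the second case. Reusing the pair $(A\cup S,\,A\cup\widehat S)$ one level down requires $|A\cup S|,|A\cup\widehat S|\le 2^{k-1}$, so that $f_{k-1,r}^{A\cup S}$ and $f_{k-1,r}^{A\cup\widehat S}$ are defined and the induction hypothesis genuinely applies; a priori these sets are only guaranteed to have size at most $2^k$, being root estimates of the two $T_k$ subtrees of $T_{k+1}$. This is automatic once $k$ is large enough that $r\le 2^{k-1}$, since then every subset of $R$ is an admissible root estimate at height $k-1$. A naive attempt to shrink an oversized set back to size $2^{k-1}$ fails, because by Theorem \ref{1} passing to a smaller superset of $A$ can only \emph{increase} the cost, which is the wrong direction; hence the finitely many remaining small-height cases (those with $r>2^{k-1}$) fall outside this clean induction and must be treated separately, for instance by checking them directly with the help of Theorem \ref{1} and the explicit small values recorded in Figure \ref{rschlange}.

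Finally, it is worth noting that the first case is settled purely by Theorem \ref{1} and never refers to height $k-1$, so it is free of the definedness subtlety; the entire difficulty is therefore localized in the intersection case with oversized state sets, and a complete write-up would focus its technical effort precisely there.
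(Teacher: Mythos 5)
Your proof is correct and follows essentially the same route as the paper's: both expand $f_{k+1,r}^A$ via \eqref{fkrA} into the intersection case (settled immediately by Theorem \ref{1}) and the union case (settled by observing that the pair $(A\cup S,\,A\cup\widehat S)$ is an admissible candidate one level down and invoking the induction hypothesis), with the paper merely phrasing the induction as a minimal-counterexample argument. The definedness subtlety you flag for oversized sets in the union case is genuine but is equally present and left implicit in the paper's own proof, so your explicit plan to treat the finitely many small-height cases separately is, if anything, more careful than the original.
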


The proof of Theorem \ref{2} is presented in the appendix. 

\par

Considering \eqref{fkrA}, it is obvious that in order to calculate $f_ {k+1,r}^{A}$ recursively, one has to consider two cases, which is due to the fact that the MP root state estimate for $T_{k+1}$ can come from a union or an intersection at the level of the two maximum pending subtrees of the standard decomposition. However, as the following crucial theorem shows, in the case where $2\leq |A| \leq r$, the recursion can be drastically simplified. 

\begin{theorem}\label{3} 
Let $i=2, \dots, r$. Then for all $k+1 \geq i-1$ for which $f_{k+1,r}^{A_i}$,   $f_{k,r}^{A_{i-1}}$ and $f_{k-(i-2),r}$ are defined, we have:
$$f_{k+1,r}^{A_i} = f_{k,r}^{A_{i-1}}= f_{k-(i-2),r}. $$
\end{theorem}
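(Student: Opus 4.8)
The plan is to prove the two asserted equalities separately, since the second one is essentially a bookkeeping consequence of the first. I would first establish the \emph{key reduction} $f_{k+1,r}^{A_i}=f_{k,r}^{A_{i-1}}$, and then obtain $f_{k,r}^{A_{i-1}}=f_{k-(i-2),r}$ by iterating it: applying the reduction repeatedly yields $f_{k,r}^{A_{i-1}}=f_{k-1,r}^{A_{i-2}}=\dots=f_{k-(i-2),r}^{A_1}=f_{k-(i-2),r}$, where each step lowers both the height and the set size by one. The hypothesis $k+1\ge i-1$ is exactly what guarantees that the smallest height reached, $k-(i-2)$, stays nonnegative, so that every intermediate term is defined; this is why the range restriction appears in the statement.

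So I would concentrate on the reduction $f_{k+1,r}^{A_i}=f_{k,r}^{A_{i-1}}$. The upper bound is a direct construction via the standard decomposition of $T_{k+1}$: on one subtree $T_k$ realize $A_{i-1}$ optimally (using $f_{k,r}^{A_{i-1}}$ leaves labelled $a$), and on the other $T_k$ put a single state $c\in R\setminus A_{i-1}$ at every leaf (using no $a$'s). Since $i\le r$ such a $c$ exists, and Fitch's operation returns $A_{i-1}\cup\{c\}=A_i$ because the two children sets are disjoint. For the lower bound I would read off from \eqref{fkrA} (equivalently Corollary~\ref{col1}) that $f_{k+1,r}^{A_i}$ is the minimum of the intersection term $f_{k,r}^{A_{i-1}}$ together with the union terms $f_{k,r}^{A_{i+s}}+f_{k,r}^{A_{i+t}}$, where $s+t\le r-i$. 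It therefore suffices to show that every union term is at least $f_{k,r}^{A_{i-1}}$, which is the crux of the whole argument.

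I would isolate this crux as an auxiliary inequality $S(h)$: for $p,q\ge i\ge 2$ with $p+q\le r+i$ one has $f_{h,r}^{A_p}+f_{h,r}^{A_q}\ge f_{h,r}^{A_{i-1}}$, and I would prove the reduction $E(h)$ and the inequality $S(h)$ \emph{together} by simultaneous induction on the height $h$, in the order ``$E(h)$, then $S(h)$''. The point that makes the obstacle tractable is that $S(h)$ follows cheaply from $E(h)$ via Lemma~\ref{lemma1}(i): rewriting each summand with $E(h)$ gives
\[
f_{h,r}^{A_p}+f_{h,r}^{A_q}=f_{h-1,r}^{A_{p-1}}+f_{h-1,r}^{A_{q-1}}\ge f_{h,r}^{A_{i-1}},
\]
where the last step holds because the constraints $p,q\ge i$ and $p+q\le r+i$ force $f_{h-1,r}^{A_{p-1}}+f_{h-1,r}^{A_{q-1}}$ to be \emph{exactly one of the union terms} in the recursion \eqref{fkrA} for $f_{h,r}^{A_{i-1}}$, hence no smaller than the minimum. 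In the converse direction, $E(h)$ uses $S(h-1)$: the union terms of $f_{h,r}^{A_i}$ satisfy the hypotheses of $S(h-1)$ at index $i$ and are thus $\ge f_{h-1,r}^{A_{i-1}}$, while the intersection term equals $f_{h-1,r}^{A_{i-1}}$, so the minimum is $f_{h-1,r}^{A_{i-1}}$ as claimed. Each step invokes only $E(h)$ (just proved this round) or $S(h-1)$ (previous round), so there is no circularity.

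The main obstacle is precisely the crux inequality $S$, and the real subtlety is organizing the induction so that $E$ and $S$ feed one another without circular dependence. The naive temptation, once \eqref{fkrA} is in hand, is to bound the union terms from below using explicit values or a doubling estimate, which becomes messy; the observation that these terms reappear inside the recursion for $f_{h,r}^{A_{i-1}}$ turns the bound into a triviality and avoids all of that. The remaining care concerns the boundary of the defined region: at the smallest height $h$ for which $A_i$ is defined (namely $2^{h-1}<i\le 2^h$) the size bound $|A_i\cup S|\le 2^{h-1}$ fails, so \eqref{fkrA} for $f_{h,r}^{A_i}$ has \emph{no} union terms and $E(h)$ holds immediately; this supplies the base of the induction. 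I would use Lemma~\ref{lemma1}(ii)--(iii) to pin down the height-$0$ and height-$1$ values and Lemma~\ref{lemma1}(i) throughout to replace any set $A_j\cup S$ by its representative $A_{|A_j\cup S|}$.
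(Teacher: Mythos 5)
Your proposal is correct and follows essentially the same route as the paper: the paper's appendix proof also runs a simultaneous induction on the height, establishing the inequality $f_{k,r}^{A_{i-1}}\le f_{k,r}^{A_i\cup S}+f_{k,r}^{A_i\cup\widehat{S}}$ (your $S(h)$) and extracting the equality $f_{k+1,r}^{A_i}=f_{k,r}^{A_{i-1}}$ (your $E(h)$) from it via \eqref{fkrA} in each round, then iterating to obtain the second equality. The only cosmetic differences are that what you call ``union terms'' are the intersection-case terms of Fitch's operation, and that the paper closes the inductive step via $f_{k,r}^{A_{i-1}\cup S}+f_{k,r}^{A_{i-1}\cup\widehat{S}}\ge f_{k,r}^{A_{i-2}}=f_{k+1,r}^{A_{i-1}}$ rather than your (equivalent) observation that the rewritten sum is itself one of the terms of the minimum defining $f_{h,r}^{A_{i-1}}$.
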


Note that Theorem \ref{3} is not only crucial in the sense that it allows us to omit a distinction between union and intersection -- it even allows for a reduction of a root state estimate $A$ with $|A|\geq 2$ to the unambiguous root state estimate $\{a\}$ but for a smaller tree. 
The proof of this theorem can be found in the appendix, but in brief the idea of the first equation stated in Theorem \ref{3} can be summarized like this: root state estimate $A_i$ can be achieved by considering the standard decomposition of $T_{k+1}$ into the two subtrees $T_k$, one of which gets assigned $A_{i-1}$ as a root state estimate, and the other one gets assigned a set, say $C$, with $|C|=1$ and with $a\not\in C$ and $A_{i-1}\cap C=\emptyset$. This way, the root estimate of $T_{k+1}$ will be $A_{i-1} \cup C = A_i$. Note that in order to achieve root state estimate $C$ for $T_k$, no leaf in $T_k$ needs to be labelled $a$ (e.g. all leafs in this tree can be labelled with the unique state in $C$, which is not $a$). So $f^C_{k,r}=0$ and thus $f_{k+1,r}^{A_i} = f_{k,r}^{A_{i-1}}+f^C_{k,r}=f_{k,r}^{A_{i-1}}$. The second equation of Theorem \ref{3} follows recursively.

\noindent
Note that by Theorem \ref{3} and Equation \eqref{fkrA}  we can also express $f_{k+1,r}$ in terms of  $f_{k,r}$. In particular, we have
\begin{align}
f_{k+1,r}&= \min\{f_{k,r}^{A_1}+f_{k,r}^{A_r},f_{k,r}^{A_2}+f_{k,r}^{A_{r-1}},f_{k,r}^{A_3}+f_{k,r}^{A_{r-2}}
,\dots, F_1 \} \label{f1}\\ 
&\quad \qquad \text{where } F_1 \coloneqq \begin{cases}
f_{k,r}^{A_p}+f_{k,r}^{A_{p+1}} &\text{if } r=2p \\
2 \cdot f_{k,r}^{A_{p}}  &\text{if } r=2p-1
\end{cases} \nonumber\\
&= \min\{f_{k,r}+f_{k-r+1,r}, f_{k-1,r}+f_{k-r+2,r}, f_{k-2,r}+f_{k-r+3,r}, \dots, F_2 \}, \label{f2}\\
&\quad \qquad \text{where } F_2 \coloneqq \begin{cases}
f_{k-p+1,r}+f_{k-p,r} &\text{if } r=2p \\
2 \cdot f_{k-p+1,r}  &\text{if } r=2p-1
\end{cases} \nonumber. 
\end{align}
\noindent
So in order for the MP root state estimate to be $\{a\}$, we still have to find the minimum over various cases. Note that the case $f_{k+1,r}=F_2$
corresponds to Conjecture \ref{conjecture}. This means that Conjecture \ref{conjecture} basically states that the other options can be discarded as they never give the minimum. As we have already seen in Observation \ref{obs}, this is not generally correct, though.

Before we can finally prove Main Theorem \ref{maintheo3}, we need one more useful property of function $f_{k,r}^A$. In particular, we find that if we reduce the number of character states that are available, the number of leafs that need to be in state $a$ in order for Maximum Parsimony to give a root state estimate of $A_i$ (where $a\in A_i$  and $|A_i|=i$ as before) cannot decrease. Again, this is intuitively clear, because we have already seen in the construction presented in the proof of Observation \ref{obs} that we could reduce the number of leafs in state $a$ by making use of as many alternative character states as possible.

\begin{theorem} \label{4}
Let $R$ be a finite set of character states with $|R|=r \geq 2$ and $a \in R$, let $2\leq \tilde{r} \leq r$ and $a \in A_i \subseteq R$ with $i \in \{1,\dots, \tilde{r}\}$. Then, for all regions where $f_{k+1,r}^{A_i}$ and  $f_{k+1,\tilde{r}}^{A_i}$ are defined, we have: 
$$f_{k+1,r}^{A_i} \leq f_{k+1,\tilde{r}}^{A_i}.$$
\end{theorem}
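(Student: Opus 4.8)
The plan is to prove the inequality by exhibiting, for the right-hand side, an optimal leaf-coloring and showing that the very same coloring is admissible for the left-hand side at no extra cost. The conceptual content is simply that enlarging the available alphabet can only enlarge the space of feasible characters, so the defining minimum can only decrease.

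First I would fix a set $\tilde R$ with $|\tilde R| = \tilde r$ and, since $i \le \tilde r \le r$, realize $\tilde R$ as a subset of $R$ with $a \in \tilde R$ and $A_i \subseteq \tilde R$. By Lemma \ref{lemma1}(i) the value $f_{k+1,r}^{A_i}$ depends only on $|A_i|$ and on the fact that $a \in A_i$, so this identification is harmless. Let $f^\star : L \to \tilde R$ be a character attaining the minimum defining $f_{k+1,\tilde r}^{A_i}$; that is, $f^\star$ uses exactly $f_{k+1,\tilde r}^{A_i}$ leaves in state $a$, and the Fitch algorithm run on $T_{k+1}$ with $f^\star$ returns $X_\rho = A_i$.

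Next I would observe that $f^\star$ is also a valid character $L \to R$, because $\tilde R \subseteq R$. The key step is to verify that running the Fitch algorithm on $T_{k+1}$ with $f^\star$ yields exactly the same state set at every internal node, and in particular $X_\rho = A_i$, whether the ambient alphabet is taken to be $\tilde R$ or $R$. This follows by a straightforward induction from the leaves to the root: the leaf sets are singletons contained in $\tilde R$, and by Definition \ref{parsimonyoperation} Fitch's operation $*$ only ever forms intersections and unions of sets already produced, so no state of $R \setminus \tilde R$ can ever enter any node set. Hence the entire computation, and in particular its root output $A_i$, is independent of the enlargement. Since $f^\star$ thereby realizes the root estimate $A_i$ on $T_{k+1}$ over the alphabet $R$ using $f_{k+1,\tilde r}^{A_i}$ leaves in state $a$, the minimum defining $f_{k+1,r}^{A_i}$ is at most this number, which gives $f_{k+1,r}^{A_i} \le f_{k+1,\tilde r}^{A_i}$.

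I expect the only real subtlety to be this alphabet-invariance of the Fitch output, which is intuitively obvious but deserves the brief inductive justification above; everything else is bookkeeping about definedness (note that $i \le \tilde r$ guarantees $A_i \subseteq \tilde R$, and the constraint $|A_i| \le 2^{k+1}$ coincides on both sides). An equivalent route, closer in spirit to the proof of Theorem \ref{1}, would be to induct on $k$ using the recursion \eqref{formel}: any decomposition $B * C = A_i$ with $B, C \subseteq \tilde R$ that is optimal for $f_{k+1,\tilde r}^{A_i}$ is also feasible over $R$, and applying the induction hypothesis to the two maximal pending subtrees (with the convention $f_{k,r}^{C} = f_{k,\tilde r}^{C} = 0$ whenever $a \notin C$) yields $f_{k+1,r}^{A_i} \le f_{k,r}^{B} + f_{k,r}^{C} \le f_{k,\tilde r}^{B} + f_{k,\tilde r}^{C} = f_{k+1,\tilde r}^{A_i}$; here the minor extra cost is establishing a suitable base case for the induction, which the embedding argument sidesteps entirely.
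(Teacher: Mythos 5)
Your proof is correct, but it takes a genuinely different and more elementary route than the paper. The paper proves Theorem \ref{4} by induction on $k$: for $i\geq 2$ it invokes Theorem \ref{3} to drop down to $A_{i-1}$ at the previous height and applies the induction hypothesis, and for $i=1$ it expands both sides via Equation \eqref{f1}, compares the minima termwise using the induction hypothesis, and then uses Theorem \ref{1} to replace the sets $A_{\tilde{r}}, A_{\tilde{r}-1},\dots$ by the larger sets $A_r, A_{r-1},\dots$ before reassembling with \eqref{f1}. Your argument instead goes back to the underlying definition of $f_{k+1,r}^{A_i}$ as a minimum over characters: an optimal character for the alphabet $\tilde{R}\subseteq R$ is also a character over $R$, and the Fitch computation is invariant under enlarging the ambient alphabet since the operation $*$ of Definition \ref{parsimonyoperation} only forms intersections and unions of sets already built from the leaf singletons, so no state of $R\setminus\tilde{R}$ can ever appear. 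This containment of feasible sets immediately gives the inequality. What your approach buys is independence from the machinery of Theorems \ref{1} and \ref{3} and from the recursion \eqref{formel} altogether, with no induction needed; what the paper's approach buys is that it stays entirely within the recursive framework in which all the other results are phrased, which may be why the authors chose it. Your handling of the identification $A_i\subseteq\tilde{R}$ via Lemma \ref{lemma1}(i) and your remark that the definedness condition $|A_i|\leq 2^{k+1}$ is the same on both sides close the only loose ends, so I see no gap.
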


A formal proof of Theorem \ref{4} can again be found in the appendix. This concludes our summary of basic properties of $f_{k,r}^A$, which we will use subsequently to prove our main results. We first focus our attention again on the case where $r$ is odd; in particular, on the special case where $r=3$.

\subsection{Results for 3-state characters} \label{3-state}
We now consider the case where $|R|=r=3$. By Theorem \ref{1} and Theorem \ref{3}, we already have for $r=3$ that
\begin{align}
&f_{k+1,3}^{A_3}=f_{k+1,3}^R= f_{k,3}^{A_2}, \label{w1} \\
&f_{k+1,3}^{A_2}= f_{k,3}, \label{w2}\\
&f_{k+1,3}^{A_1}=f_{k+1,3} = \min\{ 2 \cdot f_{k,3}^{A_2}, f_{k,3}+f_{k,3}^{R} \}  \nonumber \\
&\phantom{ppppppppppppp} =\min\{ 2 \cdot f_{k-1,3}, f_{k,3}+f_{k-2,3} \} \label{w3}
.
\end{align}
We are now finally in the position to prove Main Theorem \ref{maintheo3}.

\begin{proof}[Main Theorem \ref{maintheo3}]
First consider the statements for $f_{2,3}$ and $f_{3,3}$. As the construction presented in Figure \ref{t_3} shows, $f_{3,3}\leq 3$. Moreover, the left-hand side subtree in Figure  \ref{t_3} also shows that $f_{2,3}\leq 2$. An exhaustive search through all possible characters on $2^2=4$ and $2^3=8$ leafs, respectively, shows that in both cases, equality holds (calculations not shown). \\
So now we turn our attention to the case $k \geq 3$. Without loss of generality, we can assume that $k$ odd and prove Theorem \ref{maintheo3} separately for $k$ and $k+1$, where the latter then covers the case where the tree height is even. 

We first show the following two statements by induction on $k$:
\begin{align}
& 2 \cdot f_{k-1,3} < f_{k,3} + f_{k-2,3}, \label{for31} \\
& 2 \cdot f_{k,3} = f_{k+1,3} + f_{k-1,3} \label{for32}.
\end{align}
For $k = 3$ we have $f_{1,3}=2$ (see Figure \ref{rschlange}), $f_{2,3}=2$ and $f_{3,3}=3$ (initial conditions as explained above) and, by exhaustive search, $f_{4,3}=4$ (reached by character $abacbbbbabaccccc$ on $2^4$ leafs; calculation not shown). This leads to
\begin{align*}
&2 \cdot f_{2,3} =  4 < 5 = f_{3,3}+f_{1,3}, \\
&2 \cdot f_{3,3} =  6 = f_{4,3}+f_{2,3},
\end{align*}
which is the base case of the induction. Now suppose \eqref{for31} and \eqref{for32} are true for all $h$ with $3 \leq h \leq k$ and $h$ odd. Then with \eqref{w3}, $f_{k+1,3}$ and $f_{k+2,3}$ become 
\begin{align}
&f_{k+1,3} = 2 \cdot f_{k-1,3}, \label{for33} \\
&f_{k+2,3} = 2 \cdot f_{k,3} = f_{k+1,3}+f_{k-1,3} \label{for34},
\end{align}
where the second equation in \eqref{for34} is due to \eqref{for32}. 

Now we show that \eqref{for31} and \eqref{for32} are also true for $k+2$.
\begin{align*}
2 \cdot f_{(k+2)-1,3} &= 2 \cdot f_{k+1,3} \\
&= 2 \cdot 2 \cdot f_{k-1,3} &&\text{by } \eqref{for33} \\
&< 2 \cdot (f_{k,3}+f_{k-2,3}) &&\text{by } \eqref{for31} \\
&= 2 \cdot f_{k,3} + 2 \cdot f_{k-2,3} \\
&= f_{k+2,3} + f_{k,3} &&\text{by } \eqref{for34}
. 
\end{align*}
Therefore \eqref{for31} holds for all $k\geq 3$ and $k$ odd.
\begin{align*}
2 \cdot f_{k+2,3} &= 2 \cdot (f_{k+1,3}+f_{k-1,3}) &&\text{by } \eqref{for34} \\ 
&= 2 \cdot f_{k+1,3} + 2 \cdot f_{k-1,3} \\
&= f_{k+3,3} + f_{k+1,3} &&\text{by } \eqref{for31}
.
\end{align*}
Hence also \eqref{for32} holds for all $k\geq 3$ and $k$ odd, so that in total, \eqref{for33} and \eqref{for34} are also true for all $k \geq 3$ and $k$ odd. Thus we have that $f_{k+1,3}=2 \cdot f_{k-1,3}$ holds for all $k \geq 3$. So together with Theorem \ref{3}, we have $2 \cdot f_{k-1,3} = 2 \cdot f_{k,3}^{A_2}$, which completes the proof.
\qed 
\end{proof}

Recall that Main Theorem \ref{maintheo3} basically gives a characterization of the cases in which Conjecture \ref{conjecture} holds in the case where $r=3$; namely by choosing appropriate initial conditions in order to avoid the scenario constructed in the proof of Observation \ref{obs}. However, this result cannot easily be extended to the general case where $r$ is odd. We will discuss this further in Section \ref{discussion}.

We now turn our attention to the case where $r$ is even, which will turn out to be less involved as Conjecture \ref{conjecture} can be shown to hold in this case.

\subsection{Results for all even numbers of character states}

We are finally in a position to state our second main result. 

\begin{maintheorem} \label{gerade}
Let $T_k$ be a fully bifurcating tree of height $k$ and let $R$ be a set of character states with $|R|=r=2p$ and $p \in \mathbb{N}_{\geq 1}$. Then for $k \geq r$ and all regions where $f_{k+1,r}$, $f_{k,r}^{A_p}$,  $f_{k,r}^{A_{p+1}}$, $f_{k-p+1,r}$  and $f_{k-p,r}$ are defined,  we have:
$$
f_{k+1,r} = f_{k-p+1,r} + f_{k-p,r}.  
$$ 
\end{maintheorem}

Note that Theorem \ref{gerade} states that Conjecture \ref{conjecture} holds in the case where $r$ is even and $k\geq r$, and this statement does not depend on a specific choice of initial conditions. Before we can present the proof of Main Theorem \ref{gerade}, we need an additional lemma, which helps to cover the case $k<r$, which is not considered in the theorem. Recall that we already have $f_{0,r}=1$  and $f_{1,r}=2$ for all $r$ (by Lemma \ref{lemma1} (ii) and Figure \ref{rschlange}). The following lemma investigates the case $k < r$ further.

\begin{lemma} \label{lemmar2p}
For $r=2p$ with $p\in \mathbb{N}_{\geq 1}$, we have:  
\begin{enumerate}
\item $f_{k,r}=2$ for all $k=1, \dots, p$
\item $f_{p+1,r}=3$
\item $f_{k,r} \leq 4$ for all $p+1<k \leq r$.
\end{enumerate}
\end{lemma}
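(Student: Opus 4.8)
The three parts build on one another, so the plan is to establish them in order, drawing throughout on the standard-decomposition recursion \eqref{formel}, on Theorem \ref{3} (which lets me collapse $f_{k,r}^{A_i}$ to $f_{k-(i-1),r}$), on the size-monotonicity of Theorem \ref{1}, and on Lemma \ref{lemma1}, which records that $f_{0,r}=1$ while $f_{m,r}\ge 2$ for all $m\ge 1$. For part (1) the lower bound $f_{k,r}\ge 2$ is immediate from Lemma \ref{lemma1}(iii), so the content is the upper bound. Here I would exhibit a single split in \eqref{formel}: decompose $T_k$ into its two pending subtrees $T_{k-1}$ and assign them root estimates $B=A_k$ and $C=A_k'$, each of size exactly $k$, each containing $a$, chosen so that $B\cap C=\{a\}$. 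Such a pair exists precisely because the $2k-1$ distinct states of $B\cup C$ fit into $R$ exactly when $2k-1\le r=2p$, i.e. when $k\le p$, which is the stated range. Then $B*C=B\cap C=\{a\}$, and by Theorem \ref{3} each subtree is realized with a single $a$-leaf, namely $f_{k-1,r}^{A_k}=f_{0,r}=1$; hence $f_{k,r}\le 1+1=2$, giving $f_{k,r}=2$.

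For part (2) I would use the reduced recursion \eqref{f1} at height $p+1$, where, once Theorem \ref{1} forces the two sets at the top split to share only $a$ and otherwise be as large as possible, we have $f_{p+1,r}=\min_{1\le i\le p}\{f_{p,r}^{A_i}+f_{p,r}^{A_{2p+1-i}}\}$. By Theorem \ref{3} together with part (1), every first summand satisfies $f_{p,r}^{A_i}=f_{p-i+1,r}=2$ since $1\le i\le p$, whereas every second summand has $|A_{2p+1-i}|=2p+1-i\ge p+1$ and hence equals $1$ by Theorem \ref{3} and Theorem \ref{1} (as $f_{p,r}^{A_{p+1}}=f_{0,r}=1$). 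So each of the $p$ terms equals $2+1=3$, which yields $f_{p+1,r}=3$ in one stroke, with no separate lower-bound argument needed.

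Part (3) then follows almost immediately from part (1). Applying \eqref{f2} at tree height $k$ (that is, with the height shifted by one), the distinguished term reads $F_2=f_{k-p,r}+f_{k-p-1,r}$ and is always an admissible choice in the minimum, so $f_{k,r}\le f_{k-p,r}+f_{k-p-1,r}$. For $p+1<k\le r=2p$ both indices lie in $\{1,\dots,p\}$ (indeed $k-p$ ranges over $\{2,\dots,p\}$ and $k-p-1$ over $\{1,\dots,p-1\}$), so both values equal $2$ by part (1), yielding $f_{k,r}\le 4$.

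The main obstacle is the upper bound in part (1); everything afterwards is bookkeeping on top of it. The crux is the observation that having $r=2p$ states is exactly enough to build two ``rainbow'' subtrees $T_{k-1}$ that agree only on $a$ while each uses a single $a$-leaf, and that this is possible precisely up to $k=p$, the same threshold at which the value $2$ must fail and the numbers begin to grow. Care is also needed to verify the side conditions $k\le 2^{k-1}$ (so that $A_k$ is realizable on $T_{k-1}$) and $2k-1\le r$ (so that the required colors exist), and to justify, via Theorem \ref{1}, that restricting \eqref{f1} to splits with $|B|+|C|=r+1$ loses no generality.
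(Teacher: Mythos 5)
Your proposal is correct and follows essentially the same route as the paper's proof: the two-``rainbow''-subtree construction whose root sets meet only in $\{a\}$, Theorem \ref{3} to collapse $f_{k,r}^{A_i}$ to $f_{k-i+1,r}$, and Equation \eqref{f1} for part (2). The only cosmetic differences are that you run the upper-bound construction at every height in the stated range (so you can skip the monotonicity appeal to Theorem \ref{2} that the paper uses to finish parts (1) and (3)), and in part (2) you note that every term of the minimum equals $3$ rather than just bounding all terms below by the last one.
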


The proof of this lemma can be found in the appendix. Note that the bound stated in Lemma \ref{lemmar2p} is actually tight, as it can be shown that while $f_{5,6}=f_{6,8}=3$, we have $f_{7,8}=4$ (calculations not shown). Whether $f_{k,r}=3$ is possible for $r=2p \geq 10$ and $p+1<k \leq r$ has not yet been investigated.

We now state the proof of Main Theorem \ref{gerade}.

\begin{proof}[Main Theorem \ref{gerade}] Note that by the second equation of Theorem \ref{3}, we have $f_{k-p+1,r}+f_{k-p,r} =f_{k,r}^{A_p} + f_{k,r}^{A_{p+1}}$. We now show that for 
all $r=2p$ with $p \in \NN_{\geq1}$ we have 
\begin{align}
f_{k,r}^{A_p} + f_{k,r}^{A_{p+1}} \leq f_{k,r}^{A_i} + f_{k,r}^{A_{j}} \label{geradeib}
\end{align}
for all $1 \leq i \leq j \leq r$ and $i+j = r+1$. Together with \eqref{f1} this will prove Main Theorem \ref{gerade}.
The proof is by induction on $k$.
For $k=r$ we need to show that 
\begin{align}\label{toshow}
f_{r,r}^{A_p}+f_{r,r}^{A_{p+1}} \leq \min\{f_{r,r}^{A_1}+f_{r,r}^{A_r},f_{r,r}^{A_2}+f_{r,r}^{A_{r-1}},f_{r,r}^{A_3}+f_{r,r}^{A_{r-2}}
,\ldots, f_{r,r}^{A_{p-1}}+f_{r,r}^{A_{p+2}} \},
\end{align}
which is, by Theorem \ref{3}, equivalent to showing 
\begin{align*}
f_{r-p+1,r}+f_{r-p,r} \leq \min\{ f_{r,r}+f_{r-r+1,r},f_{r-1,r}+f_{r-r+2,r},\ldots, f_{r-p+2,r}+f_{r-p-1,r}\}.
\end{align*}
Since $r=2p$, this inequality can be written as
\begin{align}
f_{p+1,r}+f_{p,r} \leq \min\{f_{r,r}+f_{1,r},f_{r-1,r}+f_{2,r},\ldots, f_{p+2,r}+f_{p-1,r}\}. \label{rk}
\end{align}
Moreover, by Lemma \ref{lemmar2p} we know that $f_{1,r}=f_{2,r}=\dots=f_{p,r}=2$ and thus \eqref{rk} results in 
\begin{align}\label{simplification}
f_{p+1,r} \leq \min\{f_{r,r},f_{r-1,r},\ldots, f_{p+2,r}\}.
\end{align}
By Theorem \ref{2} we have 
$
f_{p+1,r} \leq f_{p+2,r} \leq \dots \leq f_{r-2,r} \leq f_{r-1,r} \leq f_{r,r}.
$
So altogether, \eqref{simplification} holds and thus also \eqref{toshow}. This complese the base case of the induction. \\
Now assume that \eqref{geradeib} holds for $k$, then with \eqref{f1} and Theorem \ref{3},  we get 
\begin{align}\label{gerade1}
f_{k+1,r} = f_{k,r}^{A_p} + f_{k,r}^{A_{p+1}}  = f_{k-p+1,r}+f_{k-p,r} . \end{align}

Now we show that \eqref{geradeib} also holds for $k+1$.
\begin{align*}
f_{k+1,r}^{A_p} + f_{k+1,r}^{A_{p+1}} &= f_{k-(p-2),r} + f_{k-(p+1-2),r} &&\text{by Theorem } \ref{3} \\
&= f_{k-p+2,r} + f_{k-p+1,r} \\
&= f_{k-p+1,r}^{A_p} + f_{k-p+1,r}^{A_{p+1}} + f_{k-p,r}^{A_p} + f_{k-p,r}^{A_{p+1}} &&\text{by } \eqref{gerade1} \\
&\leq f_{k-p+1,r}^{A_i} + f_{k-p+1,r}^{A_{j}} + f_{k-p,r}^{A_i} + f_{k-p,r}^{A_{j}} &&\text{by } \eqref{geradeib}\\
&\quad \text{for all } i,j: 1 \leq i \leq j \leq r \text{ and } i+j=r+1 \\
&= f_{k-p-(i-2),r} + f_{k-p-(j-2),r} + f_{k-p-(i-1),r} + f_{k-p-(j-1),r} &&\text{by Theorem } \ref{3} \\
&= f_{k-(p-1)-i+1,r} + f_{k-(p-1)-j+1,r} + f_{k-(p-1)-i,r} + f_{k-(p-1)-j,r} \\
&= f_{k-i+1,r}^{A_p} + f_{k-j+1,r}^{A_{p}} + f_{k-i+1,r}^{A_{p+1}} + f_{k-j+1,r}^{A_{p+1}} &&\text{by Theorem } \ref{3} \\
&= f_{k-i+1,r}^{A_p} + f_{k-i+1,r}^{A_{p+1}} + f_{k-j+1,r}^{A_{p}} + f_{k-j+1,r}^{A_{p+1}} \\
&= f_{k+1-i+1,r} + f_{k+1-j+1,r} &&\text{by } \eqref{gerade1} \\
&= f_{k-(i-2),r} + f_{k-(j-2),r} \\
&= f_{k+1,r}^{A_i} + f_{k+1,r}^{A_j} &&\text{by Theorem } \ref{3}
.
\end{align*}
Therefore \eqref{geradeib}, holds for all $r=2p$ with $p \in \NN^+$. Applying \eqref{geradeib} to \eqref{f1} and \eqref{f2} completes the proof.
\qed
\end{proof}

\section{Conclusion and Discussion}\label{discussion}
We considered the question how many leafs of a fully bifurcating tree must be labelled $a$ in order for the Maximum Parsimony root state estimate to be $a$. In this regard, we could prove a conjecture by Steel and Charleston for the case where the number $r$ of character states is even \cite{paper}. This is an interesting result because it not only proves the conjecture, but also generalizes the findings presented in \cite{paper}, where the case $r=2$ was investigated. We proved that the case where $r$ is even is simple in the sense that the conjecture holds no matter where the recursion is started.

On the other hand, we also showed that when $r$ is odd, counterexamples to the conjecture can be constructed under certain conditions. However, we were able to prove the conjecture for $r=3$ for a careful choice of initial conditions. The general case where $r$ is odd is a topic of ongoing research. 

The conjecture by Steel and Charleston and thus also our results are relevant as the fully bifurcating tree is a `worst case scenario' for unambiguous Maximum Parsimony root state estimation: As all leafs are equally far away from the root (in the sense of the number of edges that separates them from the root), there is no leaf that has a higher impact on the root state estimate than others. We showed in Section \ref{intro} that the so-called caterpillar tree is the other extreme: Here, no matter how many leafs we label with $b$, two (particular) leafs labelled $a$ will be sufficient to get an unambiguous MP root state estimate of $\{a\}$, whereas for the fully balanced tree, the number depends strongly on the size of the tree (in terms of the height and thus the number of leafs). Recall that in the literature, the fully bifurcating tree is often referred to as `balanced' as opposed to the caterpillar tree, which is often thought of as `unbalanced' \cite{liebscher}. Trees with $2^k$ leafs (for some $k\geq 1$ which are `in-between' these two trees in terms of balance will also need at least 2 leafs to be in state $a$, like the caterpillar tree, and at most $f_{k,r}$ leafs in state $a$ like the fully bifurcating tree, in order for MP to give an unambiguous root state estimate of $\{a\}$. Thus, our results present an upper bound for all trees of size $2^k$. However, the exact numbers for other tree shapes as well as other tree sizes are still to be investigated. 

\begin{acknowledgements}
We thank Mike Steel for bringing this topic to our attention. The first author also thanks the Ernst-Moritz-Arndt-University Greifswald for the Landesgraduiertenf\"orderung studentship, under which this work was conducted.
\end{acknowledgements}

\section{Appendix}

\begin{proof}[Proof of Lemma \ref{lemma1}]
(i) Let $A$, $B$ such that $a \in A \cap B$ and $|A| = |B|$. Then $A$ could be transformed into $B$ by renaming all states not element of $A \cap B$. Then $A=B$, and this yields $f_{k,r}^A=f_{k,r}^B$. 
\\
(ii) Let $k=0$. In this case our tree consists of one leaf, which is at the same time the root. This vertex have to be assigned $a$ to obtain $X_\rho=\{a\}$. Hence $f_{0,r}=1$ for all $r$.
\\
(iii) Let $k \geq 1$. Then
\begin{align*}
f_{k,r} &= \min\{f_{k-1,r}^B+f_{k-1,r}^C: B*C=\{a\} \} \\
&= \min\{f_{k-1,r}^B+f_{k-1,r}^C: B \cap C=\{a\} \}, \\
&\text{since } B \cup C \text{ would not result in } X_\rho=\{a\} \text{ as $B,C\neq \emptyset$ } .
\end{align*}
Therefore, $a \in B$ and $a \in C$ which results in $f_{k,r} \geq 2$.
\qed
\end{proof}

\begin{proof}[Theorem \ref{2}]
We have $f_{0,r}^{\{a\}}=1$ (by Lemma \ref{lemma1} (ii)) and $f_{1,r}^{\{a\}}=2$ (see Figure \ref{rschlange}) for all $r$. Therefore, Theorem \ref{2} is true for $k=0$.
We now prove Theorem \ref{2} for $k>0$ by contradiction. So assume  
there is a $\widehat{k} \text{ such that for some set } A \text{ we have } f_{\widehat{k}+1,r}^A < f_{\widehat{k},r}^A. $
Choose $k$ to be the smallest value of $\widehat{k}$ with this property, i.e. we have $f_{k+1,r}^A < f_{k,r}^A $ for some set $A$, which we fix. Moreover, for all $\widetilde{k} < k$ and all sets $\widetilde{A}$ we have $f_{\widetilde{k},r}^{\widetilde{A}} \geq f_{\widetilde{k}-1,r}^{\widetilde{A}}.$
Now let $A_1, S, \widehat{S}$ and $f_{k+1,r}^A$ be as in the proof of Theorem \ref{1}. We now consider both cases for $f_{k+1,r}^A$ for the fixed values $k$ and $A$ as chosen above.\\
\textbf{$1^{\text{st}}$ case:} $f_{k+1,r}^A=f_{k,r}^{A_1}$
with $a \in A_1$, $A_1 \subset A$ and $|A_1| = |A| - 1$.
Then $f_{k,r}^{A_1} = f_{k+1,r}^A < f_{k,r}^A $ by the assumption on $k$.
This leads to $f_{k,r}^{A_1} < f_{k,r}^A$, which contradicts Theorem \ref{1}, because $|A_1|=|A|-1 < |A|$.\\
\textbf{$2^{\text{nd}}$ case:} $f_{k+1,r}^A=f_{k,r}^{A \cup S}+f_{k,r}^{A \cup \widehat{S}}$
with $S \cap \widehat{S}= \emptyset$, $S \cap A = \emptyset$, $\widehat{S} \cap A = \emptyset$ and $|S| + |\widehat{S}| = r- |A|$.
Then
\begin{align*}
f_{k,r}^A &> f_{k+1,r}^A &&\text{by assumption on $k$} \\ 
&=f_{k,r}^{A \cup S}+f_{k,r}^{A \cup \widehat{S}} \\
&\geq f_{k-1,r}^{A \cup S}+f_{k-1,r}^{A \cup \widehat{S}} &&\text{by assumption on all $\widetilde{k}<k$, in particular $\widetilde{k}=k-1$} \\
&\geq f_{k,r}^A &&\text{by \eqref{fkrA}}.
\end{align*}
Thus $f_{k,r}^A > f_{k,r}^A$, which is a contradiction.\\ 
Therefore, both cases lead to contradictions, and thus, such a $k$ cannot exist. This completes the proof.
\qed
\end{proof}

\begin{proof}[Theorem \ref{3}]
We begin by proving
\begin{align}
f_{k,r}^{A_{i-1}} \leq f_{k,r}^{A_i \cup S}+f_{k,r}^{A_i \cup \widehat{S}} \label{theorem3}
\end{align}
for $2 \leq i \leq r$ for all $S, \widehat{S} \subseteq R:$ $S \cap \widehat{S}= \emptyset$, $S \cap A_i = \emptyset$, $\widehat{S} \cap A_i = \emptyset$ and $|S| + |\widehat{S}| = r- |A_i|$ by induction on $k$. 
For $k=1$ we have that $f_{1,r}^{A_1}=2$ and $f_{1,r}^{A_2}=1$ hold for all $r$ (see Figure \ref{rschlange}). Thus $2=f_{1,r}^{A_1} \leq f_{1,r}^{A_2}+f_{1,r}^{A_2}=1+1$ and therefore \eqref{theorem3} is true for $k=1$,  which completes the base case of the induction. Assume now that for all $i=2, \dots, r$, \eqref{theorem3} holds for $k$.  By \eqref{fkrA} and \eqref{theorem3}, for all $A_i$, $f_{k+1,r}^{A_i}$ becomes
\begin{align}
f_{k+1,r}^{A_i} &= \min
\begin{cases}
f_{k,r}^{A_i \cup S}+f_{k,r}^{A_i \cup \widehat{S}} &S \cap \widehat{S}= \emptyset, S \cap A_i = \emptyset, \widehat{S} \cap A_i = \emptyset \\ 
&\text{and } |S| + |\widehat{S}| = r- |A_i|. \\ \\
f_{k,r}^{A_{i-1}} &a \in A_{i-1}, A_{i-1} \subset A_i \text{ and } |A_{i-1}| = i-1.  \\
\end{cases} \label{3.0} \\
&\qquad \qquad \text{where the minimum is taken over all } A_{i-1}, S, \widehat{S} \nonumber \\
&= f_{k,r}^{A_{i-1}}. \label{3.1}
\end{align}

Then for all $2 \leq i \leq r$  and for all $S, \widehat{S} \subseteq R:$ $S \cap \widehat{S}= \emptyset$, $S \cap A_i = \emptyset$, $\widehat{S} \cap A_i = \emptyset$ and $|S| + |\widehat{S}| = r- |A_i|$, we can apply \eqref{3.1} to $A_i\cup S$ and $A_i\cup\widehat{S}$, respectively, and get:
\begin{align*}
f_{k+1,r}^{A_i \cup S}+f_{k+1,r}^{A_i \cup \widehat{S}} 
&= f_{k,r}^{A_{i-1} \cup S}+f_{k,r}^{A_{i-1} \cup \widehat{S}} &&\text{by } \eqref{3.1} \\
&\geq f_{k,r}^{A_{i-2}} &&\text{by } \eqref{theorem3} \\
&=f_{k+1,r}^{A_{i-1}} &&\text{by } \eqref{3.1}.
\end{align*}

Note that the first equation is true because as for $A_i\cap S=\emptyset$, we have $|A_{i-1}\cup S|=|A_{i}\cup S|-1$. The same holds for $\widehat{S}$.
So altogether, we now have that \eqref{theorem3} holds for $k+1$.\\
Applying \eqref{theorem3} to \eqref{3.0} gives $f_{k+1,r}^{A_i}=f_{k,r}^{A_{i-1}}$ for all $2 \leq i \leq r$. So now we have that \eqref{3.1} holds for all $k+1\geq i-1$, and thus we have $f_{k,r}^{A_{i-1}}=f_{k-1,r}^{A_{i-2}}= f_{k-2,r}^{A_{i-3}}= \ldots =f_{k-(i-2),r}^{A_{i-(i-1)}} =f_{k-(i-2),r}^{A_1}=f_{k-(i-2),r},$ which completes the proof.

\qed
\end{proof}

\begin{proof}[Theorem \ref{4}]
If $\tilde{r}=r$, we immediately have $f_{k+1,r}^{A_i}=f_{k+1,\tilde{r}}^{A_i}$, so there is nothing to show. So we now consider the case $\tilde{r}<r$ and prove the statement by induction on $k$.  
For $k=0$, we consider  $f_{1,r}^{A_i}$ and  $f_{1,\tilde{r}}^{A_i}$. These values are defined only if $|A_i|\leq 2^{0+1}=2$ and therefore $i=1$ or $i=2$. Thus we have to consider $f_{1,r}^{A_1},f_{1,\tilde{r}}^{A_1},f_{1,r}^{A_2}$ and $f_{1,\tilde{r}}^{A_2}$ for all $2 \leq \tilde{r} \leq r$. As can be seen in Figure \ref{rschlange}, we have for all $2 \leq \tilde{r} \leq r$:
$f_{1,r}^{A_1} =2= f_{1,\tilde{r}}^{A_1}$ and $f_{1,r}^{A_2} =1= f_{1,\tilde{r}}^{A_2}.$ This completes the base case of the induction for all $i\in \{1,\dots,\tilde{r}\}$. 

Now assume Theorem \ref{4} holds for $k$, and consider $k+1$. We start with the case $i\in \{2,\dots,\tilde{r}\}$ and find:
\begin{align*}
f_{k+2,\tilde{r}}^{A_i} &= f_{k+1,\tilde{r}}^{A_{i-1}} &&\text{by Theorem } \ref{3}\\
&\geq f_{k+1,r}^{A_{i-1}} &&\text{as Theorem } \ref{4} \text{ holds for } k \\
&=f_{k+2,r}^{A_i} &&\text{by Theorem } \ref{3}.
\end{align*}
So Theorem \ref{4} holds for $i\in \{2,\dots,\tilde{r}\}$. 
Now consider the case $i=1$:
\begin{align*}
f_{k+2,\tilde{r}}^{A_1} &= \min\{f_{k+1,\tilde{r}}^{A_1}+f_{k+1,\tilde{r}}^{A_{\tilde{r}}},
f_{k+1,\tilde{r}}^{A_2}+f_{k+1,\tilde{r}}^{A_{\tilde{r}-1}},f_{k+1,\tilde{r}}^{A_3}+f_{k+1,\tilde{r}}^{A_{\tilde{r}-2}},\dots, F_3 \} &&\text{by } \eqref{f1}\\
&\quad \qquad \text{where } F_3 \coloneqq \begin{cases}
f_{k+1,\tilde{r}}^{A_{\tilde{p}}}+f_{k+1,\tilde{r}}^{A_{\tilde{p}+1}} &\text{if } \tilde{r}=2 \tilde{p} \\
2 \cdot f_{k+1,\tilde{r}}^{A_{\tilde{p}}}  &\text{if } \tilde{r}=2 \tilde{p} -1
\end{cases} \\
&\geq \min\{f_{k+1,r}^{A_1}+f_{k+1,r}^{A_{\tilde{r}}},
f_{k+1,r}^{A_2}+f_{k+1,r}^{A_{\tilde{r}-1}},f_{k+1,r}^{A_3}+f_{k+1,r}^{A_{\tilde{r}-2}},\dots, F_4 \} &&\text{as Theorem } \ref{4} \text{ holds for } k\\
&\quad \qquad \text{where } F_4 \coloneqq \begin{cases}
f_{k+1,r}^{A_{\tilde{p}}}+f_{k+1,r}^{A_{\tilde{p}+1}} &\text{if } r=2p \\
2 \cdot f_{k+1,r}^{A_{\tilde{p}}}  &\text{if } r=2p-1
\end{cases} \\
&\geq \min\{f_{k+1,r}^{A_1}+f_{k+1,r}^{A_r},
f_{k+1,r}^{A_2}+f_{k+1,r}^{A_{r-1}},f_{k+1,r}^{A_3}+f_{k+1,r}^{A_{r-2}},\dots, F_4 \} &&\text{by Theorem } \ref{1}\\
&\quad \qquad \text{where } F_4 \coloneqq \begin{cases}
f_{k+1,r}^{A_{p}}+f_{k+1,r}^{A_{p+1}} &\text{if } r=2p \\
2 \cdot f_{k+1,r}^{A_{p}}  &\text{if } r=2p-1
\end{cases} \\
&=f_{k+2,r}^{A_1} &&\text{by } \eqref{f1}.
\end{align*}
Hence also in the case where $i=1$, Theorem \ref{4} holds. This completes the proof.
\qed
\end{proof}

\begin{proof}[Lemma \ref{lemmar2p}]
Let $r= 2p\geq 2$, i.e. $p\geq 1$. 
\begin{enumerate} \item We start with the case $k\leq p$. We have $f_{1,r}=2$ for all $r$ (see Figure \ref{rschlange}). By Theorem \ref{2} we know that $f_{k,r}$ is monotonically increasing in $k$, and thus $2=f_{1,r} \leq f_{p,r}$.
We now use the standard decomposition for $T_p$ in order to derive its two maximal pending rooted subtrees $T_{p-1}$ with roots $\rho_1$ and $\rho_2$. Using the same construction as in the proof of Observation \ref{obs}, which is depicted in Figure \ref{paperfig}, we can achieve $X_{\rho_1}=A_p=\{a,c_1, \dots, c_{p-1}\}$ and $X_{\rho_2}=A_p^{'}=\{a,c_p, \dots, c_{2p-2}\}$ by assigning $a$ to  one leaf in each subtree $T_{p-1}$, respectively, where $R=\{a,c_1,\dots,c_{2p-1}\}$, and such that all other subtrees use one state each which is unique to this subtree. Thus, the root of $T_p$ will have the MP root state estimate $A_p \cap A_p^{'}=\{a\}$. (Note that no leaf is assigned character state $c_{2p-1}$, so we do not even require all states. We will need this fact later.) So we conclude $f_{p,r}\leq 2$, which together with $f_{p,r}\geq 2$ as explained above completes $f_{p,r}=2$. So together with Theorem \ref{2} we achieve that $f_{k,r}=2$ for all $k=1,\dots,p$.
\item Now consider the case $k=p+1$. In this case we have with \eqref{f1}
\begin{align*}
f_{p+1,r}=f_{p+1,2p} &=\min\{f_{p,2p}+f_{p,2p}^{A_{2p}},f_{p,2p}^{A_{2}}+f_{p,2p}^{A_{2p-1}},\ldots, f_{p,2p}^{A_{p}}+f_{p,2p}^{A_{p+1}} \} \\
&=\min\{f_{p,2p}+f_{p,2p}^{A_{2p}},f_{p-1,2p}+f_{p,2p}^{A_{2p-1}},\ldots, f_{1,2p}+f_{0,2p} \} &&\text{by Theorem } \ref{3} \\
&=\min\{2+f_{p,2p}^{A_{2p}},2+f_{p,2p}^{A_{2p-1}},\ldots, 2+1 \} &&\text{by Lemma \ref{lemmar2p}, part 1 }  \\
&=3.\end{align*}
The latter equation is true because $1 \leq f_{p,2p}^{A_{2p}} \leq f_{p,2p}^{A_{2p-1}} \leq \dots \leq f_{p,2p}^{A_{p+1}}$ (at least one leaf hast to be labelled $a$ if $a$ shall appear in the MP root state estimate).

\item Now consider the case $k=r$. We can proceed as above and assign $X_{\rho_1}=A_p=\{a,c_1, \dots, c_{p-1}\}$ to the first of the two $T_{r-1}$ subtrees induced by the standard decomposition, and $X_{\rho_2}=A_{p+1}^{'}=\{a,c_p, \dots, c_{2p-1}\}$ to the other one, where again $R=\{a,c_1,\dots,c_{2p-1}\}$. Then by \eqref{f1} we can conclude $f_{r,r}\leq f_{r-1,r}^{A_p} + f_{r-1,r}^{A_{p+1}}$. Note that $f_{r-1,r}^{A_p} + f_{r-1,r}^{A_{p+1}}= f_{p,r}+f_{p-1,r}$ by Theorem~\ref{3}, so that $f_{r,r}\leq f_{p,r}+f_{p-1,r} = 2+2$, where the latter equation holds because of the first part of Lemma \ref{lemmar2p}. So altogether, $f_{r,r}\leq 4$. By the monotonicity of Theorem~\ref{2}, we obtain $f_{k,r} \leq 4$ for all $p+1 < k \leq r$.
\end{enumerate}
\qed
\end{proof}

\end{document}